\newtheorem{lemma}{Lemma}
\newtheorem{theorem}[lemma]{Theorem}
\newtheorem{definition}[lemma]{Definition}
\begin{document}


\title{Strong Quantum Nonlocality without Entanglement in Multipartite Quantum Systems}

\author{Pei Yuan}
\email{yuanpei@ict.ac.cn}
\author{Guojing Tian}%
\email{tianguojing@ict.ac.cn}
\author{Xiaoming Sun}
 \email{sunxiaoming@ict.ac.cn}
\affiliation{%
 CAS Key Lab of Network Data Science and Technology, Institute of Computing Technology, Chinese Academy of Sciences, 100190, Beijing, China.\\
University of Chinese Academy of Sciences, Beijing, 100049, China
}%


\date{\today}

\begin{abstract}
In this paper, we generalize the concept of strong quantum nonlocality from two aspects. Firstly in $\mathbb{C}^d\otimes\mathbb{C}^d\otimes\mathbb{C}^d$ quantum system, we present a construction of strongly nonlocal quantum states containing $6(d-1)^2$ orthogonal product states, which is one order of magnitude less than the number of basis states $d^3$. Secondly, we give the explicit form of strongly nonlocal orthogonal product basis in $\mathbb{C}^3\otimes \mathbb{C}^3\otimes \mathbb{C}^3\otimes \mathbb{C}^3$ quantum system, where four is the largest known number of subsystems in which there exists strong quantum nonlocality up to now. Both the two results positively answer the open problems in [Halder, \textit{et al.}, PRL, 122, 040403 (2019)], that is, there do exist and even smaller number of quantum states can demonstrate strong quantum nonlocality without entanglement.
\end{abstract}

\maketitle


\section{Introduction}

Quantum nonlocality, as one fundamental property of quantum mechanics, has always received widespread consideration. Generally speaking, quantum entanglement and the violation of Bell inequality are the usual evidences to show the existence of quantum nonlocality. But in recent years, the local discrimination of quantum states, see \cite{bennett1999quantum,RN1380,RN1129,RN1377,RN1139,RN4268,RN4275,RN1133,RN1238,RN567,RN1157,chitambar2013local,RN1966,liu2019distinguishing,duan2010locally} for an incomplete list, has been widely used to assure quantum nonlocality. That means, if the shared quantum state, which is chosen from a known set of quantum states, cannot be distinguished by Alice and Bob using local operations and classical communication, there will be quantum nonlocality existed. Especially the local indiscrimination for product states presents the phenomenon, i.e., quantum nonlocality without entanglement.

Actually until last year, almost all the above references focus on the local discrimination of bipartite quantum states, entangled or not. For tripartite quantum states, Halder \textit{et al.} came up with strong quantum nonlocality without entanglement \cite{halder2019strong}, and presented two explicit strongly nonlocal sets of quantum states in $\mathbb{C}^3 \otimes \mathbb{C}^3 \otimes \mathbb{C}^3$ and $\mathbb{C}^4 \otimes \mathbb{C}^4 \otimes \mathbb{C}^4$ quantum system, respectively. In their opinion, there exists strong quantum nonlocality if for tripartite quantum states, they are locally irreducible in every bipartition. This definition is natural and reasonable, but it seems not so exhaustive for more than tripartite quantum states. In ref. \cite{zhang2019strong}, Zhang \textit{et al.} gave a more general definition of strong quantum nonlocality for multipartite quantum states, that is, ``In $\mathbb{C}^{d_1} \otimes \mathbb{C}^{d_2} \otimes \cdots \otimes \mathbb{C}^{d_n}, \, n \leq 3$, a set of orthogonal quantum states is strongly nonlocal if it is locally irreducible in every $(n-1)$-partition, where $(n-1)$-partition means the whole quantum system is divided into $n-1$ parts.'' Just as they have discussed in \cite{zhang2019strong}, the strength of nonlocality of a set of states which is locally irreducible in every $j$-partition is more than another set of states which is locally irreducible in every $(j+1)$-partition. In other words, if a set of multipartite quantum states is irreducible in every bipartition, then this set is also irreducible in every more than two partition. That means, the quantum nonlocality of a set of quantum states which is locally irreducible in every bipartition is the strongest. Actually this is exactly the original definition of ``strong quantum nonlocality'' in \cite{halder2019strong}, and what we will discussed is this strong quantum nonlocality of multipartite quantum product states.\\

As known, there have already existed two bases in tripartite quantum systems, specifically $\mathbb{C}^3 \otimes \mathbb{C}^3 \otimes \mathbb{C}^3$ and $\mathbb{C}^4 \otimes \mathbb{C}^4 \otimes \mathbb{C}^4$ quantum system respectively, which are strongly nonlocal, but the construction for general tripartite system is still unknown. In \cite{zhang2019strong}, Zhang \textit{et al.} have presented a set of four-partite product states which is locally irreducible in every tripartition, but they do not explain whether it is strongly nonlocal or not in our definition. Thus the two open problems in \cite{halder2019strong} are still open, those are,
\begin{itemize}
	\item[(1)] Can we construct strongly nonlocal quantum product states, even incomplete orthogonal product basis, in general $\mathbb{C}^d \otimes \mathbb{C}^d \otimes \mathbb{C}^d$ quantum system?
	\item[(2)] Can we construct strongly  nonlocal quantum product states in more than tripartite quantum systems?
\end{itemize}

Fortunately in this paper, we solve these two questions and give positive answers.
Firstly, we build $6(d-1)^2$ strongly nonlocal quantum product states in general $\mathbb{C}^d\otimes\mathbb{C}^d\otimes\mathbb{C}^d$ system, which is one order of magnitude less than the whole dimension $d^3$. Specifically in $\mathbb{C}^3\otimes\mathbb{C}^3\otimes\mathbb{C}^3$, $\mathbb{C}^4\otimes\mathbb{C}^4\otimes\mathbb{C}^4$, the number decrease from the original 27, 64 to 24, 54. And obviously, as the dimension increasing, the number will decrease more and more. Further, these are the first incomplete orthogonal product bases which are strongly nonlocal. Secondly, we study the strongest quantum nonlocality in four-party systems. And in $\mathbb{C}^3 \otimes \mathbb{C}^3 \otimes \mathbb{C}^3 \otimes \mathbb{C}^3$ quantum system, we are able to present the explicit form of truly strongly nonlocal sets of  orthogonal quantum product states.

The rest of this paper is organized as follows. In Section \ref{sec:pre}, we present two necessary definitions of strong nonlocality.
In Section \ref{sec:333_nonlocal}, we propose the strongly nonlocal quantum state sets in $\mathbb{C}^d\otimes\mathbb{C}^d\otimes\mathbb{C}^d$ quantum system.
In Section \ref{sec:3333_nonlocal}, we show two sets of strongly nonlocal quantum state sets in
$\mathbb{C}^3\otimes\mathbb{C}^3\otimes\mathbb{C}^3\otimes\mathbb{C}^3$ quantum system.
Finally, we summarize in Section \ref{sec:summary}.

\section{Preliminaries}
\label{sec:pre}
In this section, we will review two definitions which are used through the following sections. The first is the definition of local irreducibility, based on which we can explain the second definition of strong nonlocality.

\begin{definition}[Local irreducibility, \cite{halder2019strong}]\label{def:Locally_irreducible}
A set of orthogonal quantum states is locally irreducible if it is not possible to eliminate one or more quantum states from the set by nontrivial orthogonality-preserving local measurements.
\end{definition}

The so-called ``measurement'' through our paper is a POVM measurement if there is no special explanation. A measurement is nontrivial if at least one POVM element is not proportional to the identity operator.
Otherwise, we call it a trivial measurement. Based on Definition \ref{def:Locally_irreducible}, now we can review the formal definition of strong nonlocality in \cite{halder2019strong} as below, which is the strongest case in \cite{zhang2019strong}.

\begin{definition}[Strong nonlocality, \cite{halder2019strong}]\label{def:strong_nonlocality}
A set of orthogonal product states $\{|\psi\rangle\in\mathcal{H}=\bigotimes_{i=1}^n\mathcal{H}_i| |\psi_i\rangle=|\alpha_i\rangle_1|\beta_i\rangle_2\cdots|\gamma\rangle_n,~ |\alpha_i\rangle_1\in\mathcal{H}_1,|\beta_i\rangle_2\in\mathcal{H}_2\cdots|\gamma\rangle_n\in\mathcal{H}_n \}$
is strongly nonlocal if it is locally irreducible in every bipartition, where $n\ge 3$ and $\dim \mathcal{H}_i\ge 3$ for $i=1,\ldots,n$.
\end{definition}

Next we will explain the relationships among strong nonlocality, local irreducibility and nonlocality by two detailed examples. The reason why we referred nonlocality is the fact that most researchers have focused on this topic before. Actually the strength of local irreducibility is between strong nonlocality and nonlocality. Firstly, local irreducibility is stronger than nonlocality, that is, a locally irreducible set is locally indistinguishable but not vice versa. The following example is a set which is locally indistinguishable but is locally reducible. Consider a set of orthogonal product states in $\mathbb{C}^6\otimes \mathbb{C}^6$:
\begin{equation*}
    \begin{array}{llll}
   |0\rangle|0\pm1\rangle & |0\pm1\rangle|2\rangle &
   |2\rangle|1\pm2\rangle &
   |1\pm 2\rangle|0\rangle\\
   |3\rangle|3\pm4\rangle & |3\pm4\rangle|5\rangle &
   |5\rangle|4\pm5\rangle &
   |4\pm 5\rangle|3\rangle\\
    \end{array}
\end{equation*}
The above set is locally indistinguishable because a subset of these states is locally indistinguishable \cite{bennett1999quantum}. However, if Alice perform a local measurement $\{\sum_{i=0}^2|i\rangle\langle i|,\sum_{i=3}^5|j\rangle\langle j|\}$, half of above states will be eliminated. Namely, this set of states are locally reducible.

Secondly, a locally irreducible set is not a strongly nonlocal set in general. Consider a set of orthogonal product states in $\mathbb{C}^3\otimes\mathbb{C}^3\otimes\mathbb{C}^3$, or to say $\mathcal{H}_A\otimes\mathcal{H}_B\otimes\mathcal{H}_C$:
\begin{equation*}
    \begin{array}{lll}
   |0\rangle|1\rangle|0\pm1\rangle & |1\rangle|0\pm1\rangle|0\rangle & |0\pm1\rangle|0\rangle|1\rangle\\
   |0\rangle|2\rangle|0\pm2\rangle & |2\rangle|0\pm2\rangle|0\rangle & |0\pm2\rangle|0\rangle|2\rangle\\
   \end{array}
\end{equation*}
The above set is locally irreducible, which was shown in \cite{halder2019strong}. However, if we perform measurement $\{|0\rangle|1\rangle\langle0|\langle1|+|0\rangle|2\rangle\langle0|\langle2|,\mathbb{I}-(|0\rangle|1\rangle\langle0|\langle1|+|0\rangle|2\rangle\langle0|\langle2|)\}$ in a composite space $\mathcal{H}_A\otimes\mathcal{H}_B$, then we can eliminate at least one quantum state. That means, the set is locally reducible in bipartition ``$AB|C$'', which does not satisfy the condition of strong nonlocality.

\section{Strongly nonlocal orthogonal product states in $\mathbb{C}^d\otimes{\mathbb{C}}^d\otimes {\mathbb{C}}^d$}
\label{sec:333_nonlocal}
In this section, we will construct a set of strongly nonlocal orthogonal product states (OPS) in $\mathbb{C}^d\otimes{\mathbb{C}}^d\otimes {\mathbb{C}}^d$, represented as Alice (A), Bob (B) and Charlie (C)'s Hilbert space $\mathcal{H}_A$, $\mathcal{H}_B$ and $\mathcal{H}_C$, with $d\ge 3$. The basis for each subsystem is $\{ |i\rangle \}_{i=0}^{d-1}$, and in our paper $|i\pm j\rangle$ denotes the state $\frac{1}{\sqrt{2}}(|i\rangle\pm|j\rangle)$.

We consider the following OPS in $\mathbb{C}^d\otimes{\mathbb{C}}^d\otimes {\mathbb{C}}^d$:
\begin{equation}\label{ddd_nonlocal}
  \begin{array}{ccc}
   |0\rangle|i\rangle|0\pm i\rangle & |i\rangle|0\pm i\rangle|0\rangle &|0\pm i\rangle|0\rangle|i\rangle,\\
   |i\rangle|j\rangle|0\pm i\rangle & |j\rangle|0\pm i\rangle|i\rangle &|0\pm i\rangle|i\rangle|j\rangle,
  \end{array}
\end{equation}
where $1\le i,j\le d-1$ and $i\neq j$.
Note that the set (\ref{ddd_nonlocal}) is invariant under cyclic permutation of the three parties A, B and C. On the account that local irreducibility is a sufficient condition for strong nonlocality, we first show that these states are locally irreducible in Lemma \ref{lem:ddd_irreducible}.

\begin{lemma}\label{lem:ddd_irreducible}
In $\mathbb{C}^d\otimes\mathbb{C}^d\otimes\mathbb{C}^d$, the above $6(d-1)^2$ states (\ref{ddd_nonlocal}) are locally irreducible.
\end{lemma}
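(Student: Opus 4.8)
The plan is to prove local irreducibility by contradiction: suppose some party—say Charlie, without loss of generality by the cyclic symmetry of set (\ref{ddd_nonlocal})—can perform a nontrivial orthogonality-preserving POVM $\{E_k = M_k^\dagger M_k\}$ that eliminates at least one state. I would write a general Hermitian operator $E = \sum_{s,t} a_{st}|s\rangle\langle t|$ on Charlie's $d$-dimensional space and systematically extract constraints from the requirement that $\langle\psi|\,(\mathbb{I}_{AB}\otimes E)\,|\phi\rangle = 0$ for every pair of orthogonal states $|\psi\rangle,|\phi\rangle$ in the set that agree on the $A$ and $B$ parts (these are the only pairs whose orthogonality can be destroyed by a measurement on $C$ alone).

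\textbf{Key steps.} First I would catalog which states share the same $AB$-part: the pairs $|0\rangle|i\rangle|0+i\rangle$ and $|0\rangle|i\rangle|0-i\rangle$ force $\langle 0+i|E|0-i\rangle = 0$, i.e. $a_{00} = a_{ii}$ and $a_{0i} = a_{i0}$ for all $i$ (after taking real/imaginary parts appropriately); similarly $|i\rangle|j\rangle|0+i\rangle$ paired with $|i\rangle|j\rangle|0-i\rangle$ gives the same relations but now ranging over the $i$ appearing as a first-coordinate label. Second, I would look at states that share an $AB$-part ``accidentally'' across the two rows of (\ref{ddd_nonlocal})—for instance when a second-row state's $AB$-part coincides with a first-row state's—and at states whose $C$-parts are computational basis vectors $|0\rangle$ or $|i\rangle$ (from the cyclic images), which pin down off-diagonal entries $a_{st}$ with $s\neq t$ and neither equal to $0$. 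Concretely, the states $|i\rangle|0\pm i\rangle|0\rangle$ contribute via their $C = |0\rangle$ component, and states like $|0\pm i\rangle|i\rangle|j\rangle$ contribute via $C=|j\rangle$; chaining these orthogonality conditions together should force all off-diagonal entries of $E$ to vanish and all diagonal entries to be equal, i.e. $E \propto \mathbb{I}$. Finally, since this holds for every POVM element $E_k$, the measurement is trivial, contradicting the assumption; invoking cyclic symmetry then covers Alice and Bob as well.

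\textbf{Main obstacle.} The delicate part is step two: showing that \emph{all} off-diagonal entries $a_{st}$ with $s,t \geq 1$ and $s \neq t$ are killed. The ``obvious'' pairs only directly involve superpositions of the form $|0\pm i\rangle$, so they control $a_{0i}$ and the differences $a_{00}-a_{ii}$, but an entry like $a_{12}$ coupling two nonzero labels is not obviously constrained by any single pair sharing an $AB$-part. I expect one must use the states of the form $|i\rangle|j\rangle|0\pm i\rangle$ together with the cyclically permuted families to build a connected web of constraints—e.g. comparing $|i\rangle|j\rangle|0+ i\rangle$ against states whose $C$-register is a bare basis ket, and exploiting that as $i,j$ range over all distinct pairs the $AB$-parts repeat enough to link $a_{ij}$ to $a_{0i}$ or $a_{00}-a_{jj}$. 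Carefully bookkeeping exactly which pairs of states in (\ref{ddd_nonlocal}) and its cyclic images collide on two coordinates, and verifying the resulting linear system has only the scalar-matrix solution, is where the real work lies; the rest is routine.
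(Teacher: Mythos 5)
There is a genuine gap, and it sits exactly at the point your proof relies on. Your parenthetical claim that pairs which \emph{agree} on the $A$ and $B$ parts ``are the only pairs whose orthogonality can be destroyed by a measurement on $C$ alone'' is false. For a product pair $|a_1\rangle|b_1\rangle|c_1\rangle$, $|a_2\rangle|b_2\rangle|c_2\rangle$ and a POVM element $E$ on $C$, the post-measurement overlap is $\langle a_1|a_2\rangle\langle b_1|b_2\rangle\langle c_1|E|c_2\rangle$, so a constraint $\langle c_1|E|c_2\rangle=0$ arises whenever the $A$ parts and the $B$ parts are merely \emph{non-orthogonal}, not necessarily equal. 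Restricting to literally coinciding $AB$ parts throws away precisely the constraints that kill the off-diagonal entries coupling $|0\rangle$ to $|i\rangle$: from agreeing pairs you only get conditions of the form $\langle 0\pm i|E|0\mp i\rangle=0$, which (using Hermiticity) force $a_{00}=a_{ii}$ and $\mathrm{Im}\,a_{0i}=0$ but leave $\mathrm{Re}\,a_{0i}$ free. Concretely, $E=\mathbb{I}+\epsilon(|0\rangle\langle1|+|1\rangle\langle0|)$ satisfies every constraint your restricted system produces (including $a_{jj'}=0$ for $j,j'\ge1$ when $d\ge4$), yet it is not proportional to the identity; it is ruled out only by a non-agreeing pair such as $|1\rangle|0\pm1\rangle|0\rangle$ versus $|0\pm1\rangle|0\rangle|1\rangle$, whose $A$ and $B$ factors overlap without being equal and whose $C$ parts give $\langle0|E|1\rangle=0$. (For $d=3$ the entry $a_{12}$ likewise cannot be reached by agreeing pairs at all.) This is exactly the mechanism the paper uses: its constraints come from cross pairs like $|0\rangle|i\rangle|0\pm i\rangle$, $|i\rangle|0\pm i\rangle|0\rangle$, $|0\pm i\rangle|0\rangle|i\rangle$ within the first row of (\ref{ddd_nonlocal}), whose relevant single-party factors are non-orthogonal but distinct.

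The second issue is that you explicitly defer the decisive step --- verifying that the resulting linear system admits only scalar solutions --- as ``where the real work lies.'' Since the paper's statement \emph{is} that verification, and since your stated pair-selection rule would make the system underdetermined, the proposal as written does not establish the lemma. The fix is straightforward but must be done: replace ``agree on $A$ and $B$'' by ``non-orthogonal on $A$ and on $B$,'' and then a short bookkeeping over the $6(d-1)$ states $|0\rangle|i\rangle|0\pm i\rangle$, $|i\rangle|0\pm i\rangle|0\rangle$, $|0\pm i\rangle|0\rangle|i\rangle$ already suffices (the second row of (\ref{ddd_nonlocal}) is not even needed): pairs of the first and second column force $a_{0i}=a_{i0}=0$, pairs within the second column with different indices force $a_{ij}=0$ for $i\neq j\ge1$, and the $\pm$ pairs of the third column force $a_{00}=a_{ii}$, after which cyclic symmetry handles the other two parties.
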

\begin{proof}
Through the definition of local irreducibility, we need to illustrate the impossibility of eliminating one or more quantum states by nontrivial measurements. If there is even no nontrivial measurements existed to preserve orthogonality, then the states set must be locally irreducible.
Consider the subset of OPS (\ref{ddd_nonlocal}):
\begin{equation}\label{ddd_irreducible}
  \begin{array}{c}
   |\alpha_i^{1,2}\rangle=|0\rangle|i\rangle|0\pm i\rangle, \\ |\alpha_i^{3,4}\rangle=|i\rangle|0\pm i\rangle|0\rangle, \\ |\alpha_i^{5,6}\rangle=|0\pm i\rangle|0\rangle|i\rangle,
  \end{array}
\end{equation}
where $1\le i\le d-1$.
It is obvious that if any orthogonality-preserving local measurement turns out to be trivial for the quantum states set (\ref{ddd_irreducible}), then the satisfied measurement for the quantum states set (\ref{ddd_nonlocal}) also must be trivial, so the local irreducibility of the quantum states set (\ref{ddd_nonlocal}) can be proved.

We now show that any orthogonality-preserving local POVM perform either on $A$, $B$ or $C$ must be trivial.  First, we assume Alice goes first. Let POVM $\{\Pi_A\}$ denote a general orthogonality-preserving measurement on $A$. Each POVM element can be written as a $d\times d$ matrix in the $\{|0\rangle,\cdots,|d-1\rangle\}$ basis of $\mathcal{H}_A$ generally:
\[
\Pi_A=\left(
                      \begin{array}{ccc}
                        a_{0,0} & \cdots & a_{0,d-1} \\
                        \vdots & \ddots & \vdots \\
                        a_{d-1,0} & \cdots & a_{d-1,d-1} \\
                      \end{array}
                    \right)\in\mathbb{C}^{d\times d}.
\]
The measurement must leave the postmeasurement states mutually orthogonal.
By setting the the inner product $\langle\alpha_i^1|\Pi_A \otimes \mathbb{I}_B\otimes \mathbb{I}_C|\alpha_i^3\rangle=0$, we obtain $a_{0,i}=a_{i,0}=0$ for $1\le i \le d-1$.
For $1\le i,j\le d-1$ and $i\neq j$, by setting $\langle\alpha_i^3|\Pi_A \otimes \mathbb{I}_B\otimes \mathbb{I}_C|\alpha_j^5\rangle=0$, we have $a_{i,j}=0$.
By setting $\langle\alpha_i^5|\Pi_A \otimes \mathbb{I}_B\otimes \mathbb{I}_C|\alpha_i^6\rangle=0$, we have $a_{0,0}=a_{i,i}$ for $1\le i\le d-1$.
As the diagonal elements of $\Pi_A$ are all equal and the off-diagonal elements are all zero, $\Pi_A$ must be proportional to the identity.
Therefore, Alice can not go first. Since quantum states set (\ref{ddd_irreducible}) are invariant under cyclic permutation of the parties, Bob and Charlie can not go first either. This completes the proof of the lemma.
\end{proof}

Lemma \ref{lem:ddd_irreducible} shows the local irreducibility of states (\ref{ddd_nonlocal}), which also implies quantum nonlocality. Next, we sill show their local irreducibility in every bipartition, which can assure the strong nonlocality of (\ref{ddd_nonlocal}) in Theorem \ref{thm:ddd_nonlocal}.

\begin{theorem}\label{thm:ddd_nonlocal}
In $\mathbb{C}^d\otimes\mathbb{C}^d\otimes\mathbb{C}^d$, the above quantum states  (\ref{ddd_nonlocal}) are strongly nonlocal.
\end{theorem}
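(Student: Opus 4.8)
The plan is to verify Definition~\ref{def:strong_nonlocality} directly, i.e.\ to show that the set~(\ref{ddd_nonlocal}) is locally irreducible in each bipartition. Since~(\ref{ddd_nonlocal}) is invariant under cyclic permutation of $A,B,C$, the three bipartitions $A|BC$, $B|CA$, $C|AB$ are equivalent, so it suffices to treat $A|BC$. In this bipartition a nontrivial orthogonality-preserving local measurement is carried out either by Alice on $\mathcal{H}_A$, or by the composite party on $\mathcal{H}_B\otimes\mathcal{H}_C$. The first case is already handled: a POVM on $\mathcal{H}_A$ tensored with $\mathbb{I}_B\otimes\mathbb{I}_C$ is exactly the object proved trivial in the proof of Lemma~\ref{lem:ddd_irreducible}. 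Hence the theorem reduces to the claim that every orthogonality-preserving POVM $\{E_n\}$ on $\mathcal{H}_B\otimes\mathcal{H}_C\cong\mathbb{C}^{d^2}$ is trivial.

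To prove this, fix one POVM element $E=E_n$ and regard it as a $d^2\times d^2$ Hermitian matrix in the product basis $\{|jk\rangle\}_{j,k=0}^{d-1}$ of $\mathcal{H}_B\otimes\mathcal{H}_C$. For two product states $|\psi\rangle=|a\rangle_A|u\rangle_{BC}$ and $|\phi\rangle=|a'\rangle_A|v\rangle_{BC}$ of~(\ref{ddd_nonlocal}), orthogonality preservation forces $\langle a|a'\rangle\langle u|E|v\rangle=0$; thus whenever two states of~(\ref{ddd_nonlocal}) have non-orthogonal first-party components, the matrix element of $E$ between their $BC$-parts vanishes. I would first group the states of~(\ref{ddd_nonlocal}) by Alice label into families $G_0$ (label $|0\rangle$), $G_k$ (label $|k\rangle$) and $G_k^\pm$ (label $|0\pm k\rangle$), $1\le k\le d-1$, and record the $BC$-part of each state. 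Within a family all Alice labels coincide, and in each family the $BC$-parts are either single basis vectors (this happens for $G_k^\pm$) or have the form $\frac{1}{\sqrt{2}}(|u\rangle\pm|v\rangle)$ over a collection of pairwise-disjoint two-element index sets $\{u,v\}$ (families $G_0$ and $G_k$). The within-family relations then give: $E_{uv}=0$ for $u\ne v$ in the same $G_k^\pm$ set; on each two-element set $\{u,v\}$ of $G_0$ or $G_k$ one gets $E_{uu}=E_{vv}$ and $E_{uv}\in\mathbb{R}$; and $E$ vanishes between any two indices lying in distinct two-element sets of a common family. Chaining the equalities $E_{uu}=E_{vv}$ across the families $G_0,G_1,\dots,G_{d-1}$ — here is exactly where $d\ge3$ is used, to have enough families available — shows that all $d^2$ diagonal entries of $E$ coincide.

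It remains to kill the off-diagonal entries not already covered: the entries $E_{uv}$ living on the two-element pairs of the families (forced so far only to be real) and the entries between indices that no single family sees together. For these I would invoke the cross-family constraints, i.e.\ orthogonal pairs of states drawn from $G_0$ and $G_k^\pm$, from $G_k$ and $G_k^\pm$, or from $G_k^\pm$ and $G_{k'}^\pm$ with $k\ne k'$ — the only families whose Alice labels are non-orthogonal. The structural facts that make this go through are that the row-2 states $|0\pm i\rangle|i\rangle|j\rangle$ contribute the \emph{pure} basis vectors $|ij\rangle$ ($i\ne j$) as $BC$-parts carrying the labels $|0\pm i\rangle$, while the row-2 states $|k\rangle|0\pm i\rangle|i\rangle$ inject a $|ii\rangle$ component into the family $G_k$; pairing such a state with a suitable state of $G_0$ or $G_{k'}^\pm$ isolates, for every index pair, a single element $E_{(ab)(cd)}$ and forces it to $0$. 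Running through all types of index pairs — with extra care at $d=3$, where one cannot always choose an auxiliary index freely and must instead use a cross-family pair — yields $E=\lambda\,\mathbb{I}_{d^2}$. Hence every orthogonality-preserving measurement on $\mathcal{H}_B\otimes\mathcal{H}_C$ is trivial, so~(\ref{ddd_nonlocal}) is locally irreducible in $A|BC$ and, by cyclic symmetry, in every bipartition; by Definition~\ref{def:strong_nonlocality} it is strongly nonlocal. I expect the main obstacle to be precisely this last step: organizing the cross-family pairs so that genuinely every off-diagonal entry is reached, and not overlooking the entries that the within-family analysis pins down only to be real.
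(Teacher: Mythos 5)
Your proposal is correct and follows essentially the same route as the paper: reduce to the bipartition $A|BC$ by cyclic symmetry, dispose of Alice via Lemma~\ref{lem:ddd_irreducible}, and then show that any orthogonality-preserving POVM element on $\mathcal{H}_B\otimes\mathcal{H}_C$, viewed as a $d^2\times d^2$ matrix, has vanishing off-diagonal and equal diagonal entries, hence is proportional to the identity. The paper's proof is precisely the explicit enumeration of the within-family and cross-family inner-product constraints that your Alice-label grouping organizes (including the entries your within-family step only pins down to be real, e.g.\ $b_{j0,jk}$ and $b_{0i,ii}$, which are killed by pairs such as $\langle\alpha_i^{1,2}|\mathbb{I}_A\otimes\Pi_{BC}|\beta_{st}^{5}\rangle=0$ and $\langle\alpha_i^{1,2}|\mathbb{I}_A\otimes\Pi_{BC}|\alpha_i^{5}\rangle=0$), so the remaining work you flag is exactly the bookkeeping the paper carries out and it does close for all $d\ge 3$.
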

\begin{proof}
Since quantum state set (\ref{ddd_nonlocal}) is invariant under cyclic permutation of the parties, we only need to consider the bipartition $A|BC$. We will show that any orthogonality-preserving local POVM performed either on A or BC together must be trivial.
In bipartition $A|BC$, the state set (\ref{ddd_nonlocal}) takes the form:
\begin{equation*}
  \begin{array}{ll}
   |\alpha_i^{1,2}\rangle=|0\rangle|i0\pm ii\rangle & |\beta_{ij}^{1,2}\rangle=|i\rangle|j0\pm ji\rangle \\
   |\alpha_i^{3,4}\rangle=|i\rangle|00\pm i0\rangle & |\beta_{ij}^{3,4}\rangle=|j\rangle|0i\pm ii\rangle \\
   |\alpha_i^{5,6}\rangle=|0\pm i\rangle|0i\rangle & |\beta_{ij}^{5,6}\rangle=|0\pm i\rangle|ij\rangle\\
  \end{array}
\end{equation*}
where $1\le i,j\le d-1$ and $i\neq j$.

According to Lemma \ref{lem:ddd_irreducible}, Alice can not go first. We consider whether it is possible to initial a local protocol by performing some non-trivial POVM on $BC$. Let $\{\Pi_{BC}\}$ be a general orthogonal-preserving measurement on $BC$. Each POVM element can be written as a $d^2\times d^2$ matrix in the $\{|00\rangle,|01\rangle, \cdots, |(d-1)(d-1)\rangle$ basis of $\mathcal{H}_{BC}$. Assume one element in $\{ \Pi_{BC}  \}$ is as follows.
\[
\Pi_{BC}=\left(
                     \begin{array}{ccc}
                       b_{00,00} & \cdots & b_{00,(d-1)(d-1)} \\
                       \vdots & \ddots & \vdots \\
                       b_{(d-1)(d-1),00} & \cdots & b_{(d-1)(d-1),(d-1)(d-1)} \\
                     \end{array}
                   \right)
\]
For states $|\alpha_i^{1,2}\rangle$, $|\alpha_j^{1,2}\rangle$ and $1\le i\neq j\le d-1$, we known $\langle\alpha_i^{1,2}|\mathbb{I}_A\otimes\Pi_{BC}|\alpha_j^{1,2}\rangle=0$, which implies $b_{i0,j0}=b_{i0,jj}=b_{ii,j0}=b_{ii,jj}=0$.
For states $|\alpha_i^{5}\rangle$, $|\alpha_j^{5}\rangle$ and $1\le i\neq j\le d-1$, we have $\langle\alpha_i^{5}|\mathbb{I}_A\otimes\Pi_{BC}|\alpha_j^{5}\rangle=0$, \emph{i.e.}, $b_{0i,0j}=0$.
For states $|\beta_{st}^5\rangle$, $|\beta_{st}^5\rangle$, $1\le i,j,s,t\le d-1$, $s\ne t$ and $i\ne j$, we have $\langle\beta_{st}^{5}|\mathbb{I}_A\otimes\Pi_{BC}|\beta_{ij}^{5}\rangle=0$, which implies $b_{st,ij}=0$.
For $1\le i,j\le d-1$, if $\langle\alpha_i^{1,2}|\mathbb{I}_A\otimes\Pi_{BC}|\alpha_j^{5}\rangle=0$, then we have $b_{i0,0j}=b_{ii,0j}=b_{0j,i0}=b_{0j,ii}=0$.
For $1\le i,s,t\le d-1$ and $s\ne t$, if $\langle\alpha_i^{1,2}|\mathbb{I}_A\otimes\Pi_{BC}|\beta_{st}^{5}\rangle=0$, then we have $b_{i0,st}=b_{ii,st}=b_{st,i0}=b_{st,ii}=0$.
For states $|\alpha_{i}^{3,4}\rangle$, $|\alpha_i^{5}\rangle$ and $1\le i\le d-1$, we have $\langle\alpha_i^{3,4}|\mathbb{I}_A\otimes\Pi_{BC}|\alpha_{i}^{5}\rangle=0$, which results in $b_{00,0i}=b_{i0,0i}=b_{0i,00}=b_{0i,i0}=0$.
For states $|\alpha_{i}^{3,4}\rangle$, $\beta_{ij}^{1,2}$ and $1\le i\neq j\le d-1$, we obtain $\langle\alpha_i^{3,4}|\mathbb{I}_A\otimes\Pi_{BC}|\beta_{ij}^{1,2}\rangle=0$, which means $b_{00,j0}=b_{00,ji}=b_{j0,00}=b_{ji,00}=0$.
For $1\le i\ne j\le d-1$, if $\langle\alpha_i^{3,4}|\mathbb{I}_A\otimes\Pi_{BC}|\beta_{ji}^{3,4}\rangle=0$, then we obtain $b_{00,jj}=b_{jj,00}=0$.
For $1\le i,s,t\le d-1$ and $s\ne t$, if $\langle\alpha_i^{5}|\mathbb{I}_A\otimes\Pi_{BC}|\beta_{st}^{5}\rangle=0$, then we have $b_{0i,st}=b_{st,0i}=0$.
For states $|\beta_{is}^{1,2}\rangle$, $|\beta_{ti}^{3,4}\rangle$, $1\le i,s,t\le d-1$, $i\neq s$ and $i\neq t$, we have $\langle\beta_{is}^{1,2}|\mathbb{I}_A\otimes\Pi_{BC}|\beta_{ti}^{3,4}\rangle=0$, which implies $b_{s0,0t}=b_{s0,tt}=b_{0t,s0}=b_{tt,s0}=0$.
All above discussion illuminates that all off-diagonal elements of $\Pi_{AB}$ are equal to 0.

Moreover, for state $|\alpha_{i}^{1,2}\rangle$ and $1\le i\le d-1$, let $\langle\alpha_i^{1}|\mathbb{I}_A\otimes\Pi_{BC}|\alpha_{i}^{2}\rangle=0$, we can obtain $b_{i0,i0}=b_{ii,ii}$.
For state $|\alpha_{i}^{3,4}\rangle$ and $1\le i\le d-1$, let $\langle\alpha_i^{3}|\mathbb{I}_A\otimes\Pi_{BC}|\alpha_{i}^{4}\rangle=0$, we can obtain $b_{00,00}=b_{i0,i0}$.
For state $|\beta_{ij}^{1,2}\rangle$ and $1\le i\ne j\le d-1$, let $\langle\beta_{ij}^{1}|\mathbb{I}_A\otimes\Pi_{BC}|\beta_{ij}^{2}\rangle=0$, we can obtain $b_{j0,j0}=b_{ji,ji}$.
For state $|\beta_{ij}^{3,4}\rangle$ and $1\le i\ne j\le d-1$, let $\langle\beta_{ij}^{3}|\mathbb{I}_A\otimes\Pi_{BC}|\beta_{ij}^{4}\rangle=0$, we can obtain $b_{0i,0i}=b_{ii,ii}$.
That is, all the diagonal elements of $\Pi_{BC}$ are equal.

As the diagonal elements of $\Pi_{BC}$ are all equal and the off-diagonal elements are all zero, $\Pi_{BC}$ must be proportional to the identity. Hence, $BC$ can not go first. This completes the proof of theorem.
\end{proof}

Up to now, we have constructed $6(d-1)^2$ strongly nonlocal OPS in $\mathbb{C}^d\otimes\mathbb{C}^d\otimes\mathbb{C}^d$ quantum system. The number of constructed states is one order of magnitude less than the dimension of whole space $d^3$, which is the number of existed strongly nonlocal product states. Especially, we present 24 and 54 strongly nonlocal OPB in $\mathbb{C}^3\otimes\mathbb{C}^3\otimes\mathbb{C}^3$ and $\mathbb{C}^4\otimes\mathbb{C}^4\otimes\mathbb{C}^4$, which is 3 and 10 fewer than states proposed in \cite{halder2019strong} respectively. This gives a positive answer to one open problem in \cite{halder2019strong} that ``whether incomplete orthogonal product bases can be strongly nonlocal''. Theorem \ref{thm:ddd_nonlocal} also illuminates that local implementation of a three-party separable measurement in general three-party space can not be realized unless uses entanglement across all bipartitions.

\section{Strongly Nonlocal Orthogonal Product States in general system $\mathbb{C}^3\otimes \mathbb{C}^3\otimes \mathbb{C}^3\otimes \mathbb{C}^3$}
\label{sec:3333_nonlocal}
Ref. \cite{halder2019strong} leaves another open problem that how to construct strongly nonlocal OPS in multiparty systems? In \cite{zhang2019strong}, a set of `strongly nonlocal' OPS was proposed in a specific four-party system. However, the presented set is not a strongly nonlocal set defined in this paper. Since only part of bipartitions are consider in \cite{zhang2019strong}, it is a partial answer for this open question.

In this section, we will construct a set of strongly nonlocal orthogonal product states in specific four-party systems: $\mathbb{C}^3\otimes \mathbb{C}^3\otimes \mathbb{C}^3\otimes \mathbb{C}^3$ and $\mathbb{C}^4\otimes \mathbb{C}^4\otimes \mathbb{C}^4\otimes \mathbb{C}^4$. The strong nonlocality has not been considered in previous works.
Here similarly, let $|i\pm j\rangle|k\pm l\rangle$ denote quantum states $\frac{1}{2}(|i\rangle+|j\rangle)(|k\rangle+|l\rangle)$, $\frac{1}{2}(|i\rangle+|j\rangle)(|k\rangle-|l\rangle)$, $\frac{1}{2}(|i\rangle-|j\rangle)(|k\rangle+|l\rangle)$ and $\frac{1}{2}(|i\rangle-|j\rangle)(|k\rangle-|l\rangle)$.

The strongly nonlocal orthogonal product basis (OPB) in $\mathbb{C}^3\otimes \mathbb{C}^3\otimes \mathbb{C}^3\otimes \mathbb{C}^3$ is described as follows:

\begin{equation}\label{3333_nonlocal}
\begin{array}{lll}
    |0\rangle|1\rangle|0\pm1\rangle|0\pm2\rangle & |1\rangle|1\rangle|0\rangle|1\pm 2\rangle & |1\rangle|0\rangle|1\rangle|0\rangle\\
    |1\rangle|2\rangle|1\pm2\rangle|1\pm0\rangle & |2\rangle|2\rangle|1\rangle|2\pm 0\rangle & |2\rangle|1\rangle|2\rangle|1\rangle\\
    |2\rangle|0\rangle|2\pm 0\rangle|2\pm 1\rangle & |0\rangle|0\rangle|2\rangle|0\pm 1\rangle & |0\rangle|2\rangle|0\rangle|2\rangle\\
    |1\rangle|0\pm1\rangle|0\pm2\rangle|0\rangle & |1\rangle|0\rangle|1\pm 2\rangle|1\rangle & |0\rangle|1\rangle|0\rangle|1\rangle\\
    |2\rangle|1\pm2\rangle|1\pm0\rangle|1\rangle & |2\rangle|1\rangle|2\pm 0\rangle|2\rangle & |1\rangle|2\rangle|1\rangle|2\rangle\\
    |0\rangle|2\pm 0\rangle|2\pm 1\rangle|2\rangle & |0\rangle|2\rangle|0\pm 1\rangle|0\rangle & |2\rangle|0\rangle|2\rangle|0\rangle\\
    |0\pm1\rangle|0\pm2\rangle|0\rangle|1\rangle & |0\rangle|1\pm 2\rangle|1\rangle|1\rangle & |0\rangle|0\rangle|0\rangle|0\rangle\\
    |1\pm2\rangle|1\pm0\rangle|1\rangle|2\rangle & |1\rangle|2\pm 0\rangle|2\rangle|2\rangle & |1\rangle|1\rangle|1\rangle|1\rangle\\
    |2\pm 0\rangle|2\pm 1\rangle|2\rangle|0\rangle & |2\rangle|0\pm 1\rangle|0\rangle|0\rangle & |2\rangle|2\rangle|2\rangle|2\rangle\\
    |0\pm2\rangle|0\rangle|1\rangle|0\pm1\rangle & |1\pm 2\rangle|1\rangle|1\rangle|0\rangle &\\
    |1\pm0\rangle|1\rangle|2\rangle|1\pm2\rangle & |2\pm 0\rangle|2\rangle|2\rangle|1\rangle &\\
    |2\pm 1\rangle|2\rangle|0\rangle|2\pm 0\rangle & |0\pm 1\rangle|0\rangle|0\rangle|2\rangle &\\
\end{array}
\end{equation}

Firstly, we will show states (\ref{3333_nonlocal}) are locally irreducible.
\begin{lemma}\label{lem:3333_irreducible}
In $\mathbb{C}^3\otimes\mathbb{C}^3\otimes\mathbb{C}^3\otimes\mathbb{C}^3$, quantum states (\ref{3333_nonlocal}) are locally irreducible.
\end{lemma}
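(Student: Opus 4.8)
The plan is to follow the template of the proof of Lemma~\ref{lem:ddd_irreducible}: to establish local irreducibility it suffices to show that every positive semidefinite operator $E$ acting on a single party, which satisfies the orthogonality-preservation condition $\langle\psi|E\otimes\mathbb{I}|\phi\rangle=0$ for all mutually orthogonal $|\psi\rangle,|\phi\rangle$ in (\ref{3333_nonlocal}), must be proportional to the identity. Indeed, once this is known, the first nontrivial step of any local elimination protocol is impossible, and since a trivial measurement leaves the surviving ensemble (up to rescaling) unchanged, no state can ever be discarded. The set (\ref{3333_nonlocal}) is invariant under the cyclic permutation of the four parties $ABCD\mapsto BCDA$: this permutation acts on the twelve entries of each of the first two columns of the table as three $4$-cycles, and on the nine entries of the last column as three transpositions together with three fixed points. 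Hence it is enough to treat a measurement on Alice's system.

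First I would write a generic POVM element on $\mathcal{H}_A$ as $E=\sum_{i,j=0}^{2}a_{ij}\,|i\rangle\langle j|$. For orthogonal states $|\psi\rangle=|u\rangle_A|\mu\rangle_{BCD}$ and $|\phi\rangle=|v\rangle_A|\nu\rangle_{BCD}$ taken from (\ref{3333_nonlocal}), the condition reads $\langle u|E|v\rangle\,\langle\mu|\nu\rangle=0$, so every such pair whose $BCD$-parts overlap yields a linear relation among the $a_{ij}$. To annihilate the off-diagonal entries, for each pair $i\neq j$ I would exhibit two states whose $\mathcal{H}_A$-components are the basis vectors $|i\rangle$, $|j\rangle$ and whose $BCD$-components are non-orthogonal --- for $(i,j)=(0,1)$, for instance, the ``$+$'' member of $|0\rangle|1\pm2\rangle|1\rangle|1\rangle$ together with $|1\rangle|1\rangle|1\rangle|1\rangle$ gives $a_{01}=0$; the cyclic permutation symmetry and the simultaneous relabelling symmetry $0\to1\to2\to0$ of the local bases reduce the remaining pairs to this single verification. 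To equalize the diagonal, I would use pairs whose $\mathcal{H}_A$-components are $|i\pm j\rangle$ and $|i\mp j\rangle$ (which are automatically orthogonal) with coinciding $BCD$-parts, e.g.\ the ``$++$'' and ``$-+$'' members of $|0\pm1\rangle|0\pm2\rangle|0\rangle|1\rangle$ and of $|0\pm2\rangle|0\rangle|1\rangle|0\pm1\rangle$; since the off-diagonal entries already vanish, these force $a_{00}=a_{11}$ and $a_{00}=a_{22}$. Hence $E\propto\mathbb{I}_A$, and by the cyclic symmetry the same conclusion holds on parties $B$, $C$, $D$, so the states (\ref{3333_nonlocal}) are locally irreducible.

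The work here is essentially bookkeeping and I do not expect a deep obstacle; the one point to handle carefully is verifying that the families of orthogonal pairs chosen above really do determine all nine entries of $E$. If it simplifies the exposition, one may instead first pass to a small but still ``rigid'' subfamily of (\ref{3333_nonlocal}) --- any measurement preserving orthogonality on the full set does so on every subset --- and carry out the argument there, exactly as the proof of Lemma~\ref{lem:ddd_irreducible} reduces to the subset (\ref{ddd_irreducible}).
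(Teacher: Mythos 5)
Your proposal is correct and follows essentially the same route as the paper's proof: write a generic orthogonality-preserving POVM element on one party as a $3\times 3$ matrix, use specific orthogonal pairs with non-orthogonal complements to kill the off-diagonal entries and equalize the diagonal, and then invoke the cyclic permutation invariance of the set to cover $B$, $C$, $D$. The only (harmless) differences are cosmetic: you work with the full set and exploit the relabelling symmetry $0\to1\to2\to0$ to reduce the off-diagonal checks to one, whereas the paper restricts to the subset (\ref{3333_irreducible}) and verifies each pair explicitly.
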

\begin{proof}
Consider the subset of OPB (\ref{3333_nonlocal}):
\begin{equation}\label{3333_irreducible}
\begin{array}{l l}
|\psi_{1,2}\rangle=|1\rangle|1\rangle|0\rangle|1\pm 2\rangle & |\psi_{13,14}\rangle=|1\rangle|0\rangle|1\pm 2\rangle|1\rangle \\
|\psi_{3,4}\rangle=|2\rangle|2\rangle|1\rangle|2\pm 0\rangle & |\psi_{15,16}\rangle=|2\rangle|1\rangle|2\pm 0\rangle|2\rangle\\
|\psi_{5,6}\rangle=|0\rangle|0\rangle|2\rangle|0\pm 1\rangle & |\psi_{17,18}\rangle=|0\rangle|2\rangle|0\pm 1\rangle|0\rangle\\
|\psi_{7,8}\rangle=|0\rangle|1\pm 2\rangle|1\rangle|1\rangle & |\psi_{19,20}\rangle=|1\pm 2\rangle|1\rangle|1\rangle|0\rangle\\
|\psi_{9,10}\rangle=|1\rangle|2\pm 0\rangle|2\rangle|2\rangle & |\psi_{21,22}\rangle=|2\pm 0\rangle|2\rangle|2\rangle|1\rangle\\
|\psi_{11,12}\rangle=|2\rangle|0\pm 1\rangle|0\rangle|0\rangle & |\psi_{23,24}\rangle=|0\pm 1\rangle|0\rangle|0\rangle|2\rangle\\
\end{array}
\end{equation}

We will show that any orthogonality-preserving local POVM perform either on $A$, $B$, $C$ or $D$ must be trivial.
Let the POVM $\{\Pi_A\}$ be a general orthogonal-preserving measurement on $A$.
Each POVM can be written as a $3\times3$ matrix on the $\{|0\rangle,|1\rangle,|2\rangle\}$ basis of $\mathcal{H}_A$:
\begin{equation*}
\Pi_A=\left(
  \begin{array}{ccc}
    a_{0,0} & a_{0,1} & a_{0,2} \\
    a_{1,0} & a_{1,1} & a_{1,2} \\
    a_{2,0} & a_{2,1} & a_{2,2} \\
  \end{array}
\right)\in\mathbb{C}^{3\times 3}.
\end{equation*}
If Alice's measurement leaves the states orthogonal, then for $|\psi_{5}\rangle$ and $|\psi_{13}\rangle$,
$\langle\psi_{5}|\Pi_A\otimes\mathbb{I}_B\otimes\mathbb{I}_C\otimes\mathbb{I}_D|\psi_{13}\rangle=0$. Thus, $a_{0,1}=a_{1,0}=0$.
Similarly, for $|\psi_{3}\rangle, |\psi_{17}\rangle$ and $|\psi_{1}\rangle, |\psi_{15}\rangle$, we can obtain
$a_{0,2}=a_{2,0}=0$ and $a_{1,2}=a_{2,1}=0$ respectively. For $|\psi_{19}\rangle$ and $|\psi_{20}\rangle$, $\langle\psi_{19}|\Pi_A\otimes\mathbb{I}_B\otimes\mathbb{I}_C\otimes\mathbb{I}_D|\psi_{20}\rangle=0$ implies $a_{1,1}=a_{2,2}$.
Similarly, for $|\psi_{21}\rangle$ and $|\psi_{22}\rangle$, we can obtain $a_{2,2}=a_{0,0}$. Namely, if Alice goes first, the POVM element $\Pi_A$ is
proportional to the identity, which is a trivial POVM element. Hence, Alice can not go first. Since the quantum states (\ref{3333_irreducible}) are invariant under the cyclic permutation of the parties, Bob, Charlie and Eve can not go first.
This completes the proof of the lemma.
\end{proof}

\begin{lemma}\label{lem:3333_irreducible2}
In every bipartition, i.e., $AB|CD$, $AC|BD$ and $AD|BC$, orthogonal product states (\ref{3333_nonlocal}) are locally irreducible.
\end{lemma}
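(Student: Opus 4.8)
The plan is to treat each of the three $2|2$ bipartitions by the method used in the proof of Theorem~\ref{thm:ddd_nonlocal}: a general orthogonality-preserving POVM element on a nine-dimensional half is written as a $9\times 9$ matrix $\Pi=\sum_{p,q,r,s}b_{pq,rs}\,|pq\rangle\langle rs|$ in the product basis $\{|00\rangle,|01\rangle,\dots,|22\rangle\}$ of that half, and the orthogonality relations $\langle\psi|\,\mathbb{I}\otimes\Pi\,|\psi'\rangle=0$, ranging over pairs $|\psi\rangle,|\psi'\rangle$ from (\ref{3333_nonlocal}) that agree on the complementary half, are used to force every off-diagonal entry $b_{pq,rs}$ to vanish and every diagonal entry $b_{pq,pq}$ to take one common value, so that $\Pi\propto\mathbb{I}_{9}$ and neither half can go first.

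Before any computation I would cut down the case analysis by symmetry. One checks that the set (\ref{3333_nonlocal}) is invariant under the cyclic permutation $\pi:\,A\to B\to C\to D\to A$ of the four parties (it permutes the four blocks of three rows cyclically and permutes the nine singleton states among themselves). On $2|2$ bipartitions $\pi$ sends $AB|CD$ to $AD|BC$ and fixes $AC|BD$; moreover $\pi^{2}=(AC)(BD)$ interchanges the halves $AB$ and $CD$, while $\pi$ itself interchanges the halves $AC$ and $BD$. Hence the lemma follows from just two statements: (i) every orthogonality-preserving POVM on $\mathcal{H}_{AB}$ is trivial, which with $\pi$ and $\pi^{2}$ disposes of both halves of $AB|CD$ and of $AD|BC$; and (ii) every orthogonality-preserving POVM on $\mathcal{H}_{AC}$ is trivial, which with $\pi$ disposes of both halves of $AC|BD$.

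For each of (i) and (ii) the steps would be: group the states of (\ref{3333_nonlocal}) by their content on the complementary pair; among states sharing a given complementary content, first use pairs with \emph{different} content on the measured pair to conclude $b_{pq,rs}=0$ for the corresponding off-diagonal positions; then use pairs differing only in the sign of one $|i\pm j\rangle$ or $|i\pm j\rangle|k\pm l\rangle$ factor to extract diagonal equalities such as $b_{pq,pq}=b_{rs,rs}$, exactly as in the closing chain of equalities in the proof of Theorem~\ref{thm:ddd_nonlocal}; finally check that those equalities connect all nine diagonal indices, so $\Pi\propto\mathbb{I}_{9}$.

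The step I expect to be the main obstacle is the bookkeeping for case (ii), the ``diagonal'' split $AC|BD$. Under any permutation symmetry of a four-element set the four adjacent pairs $\{A,B\},\{B,C\},\{C,D\},\{D,A\}$ form one orbit and the two diagonal pairs $\{A,C\},\{B,D\}$ form a different orbit, so (ii) genuinely cannot be reduced to (i); one must verify directly that the states of (\ref{3333_nonlocal}) restrict to a rich enough family on $\mathcal{H}_{AC}$ — concretely, that for each off-diagonal position $b_{pq,rs}$ some pair of states of (\ref{3333_nonlocal}) sharing their $BD$-content forces it to vanish, and that the resulting diagonal equalities form a connected graph on the nine indices. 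Confirming that the construction was built so that these two conditions hold for the $AC$ (hence $BD$) split, and likewise for the $AB$ (hence $CD$) split in (i), is the heart of the proof; the remaining inner-product bookkeeping is routine.
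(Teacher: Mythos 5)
Your framework is the paper's own: write an orthogonality-preserving POVM element on each nine-dimensional half as a $9\times 9$ matrix, kill the off-diagonal entries and equalize the diagonal ones via inner products $\langle\psi|\Pi\otimes\mathbb{I}|\psi'\rangle=0$, and use the cyclic symmetry $A\to B\to C\to D$ of the set (\ref{3333_nonlocal}) to reduce to one half of $AB|CD$ and one half of $AC|BD$; the symmetry bookkeeping you give ($\pi$ maps $AB|CD$ to $AD|BC$, $\pi^2$ swaps the halves of $AB|CD$, $\pi$ swaps the halves of $AC|BD$) is correct and matches the reductions the paper makes at the start of Appendix~A. The gap is that you stop exactly where the lemma's actual content begins. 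You state the two conditions that must hold — for every off-diagonal position some admissible pair of states forces it to zero, and the diagonal equalities connect all nine indices — and then defer them (``confirming that the construction was built so that these two conditions hold \ldots is the heart of the proof''). That confirmation is not routine and is not implied by anything already proved: it is a property of this particular $81$-state construction, and the paper establishes it only by exhibiting explicit witness pairs for every matrix position (the long Tables~I--III and the displayed sign-splitting computations in the proof of Lemma~\ref{lem:3333_irreducible2}). A proposal that asserts the existence of such witnesses without producing them has not proved the lemma.

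There is also a concrete flaw in the rule you give for generating constraints. You restrict to pairs of states that \emph{agree} on the complementary half. The useful pairs are those whose components on the complementary half are merely \emph{non-orthogonal}; requiring them to be identical discards most of the relations the paper actually needs. Already the first step of the paper's argument for $AB|CD$ pairs $|\psi_5\rangle=|0\rangle|0\rangle|2\rangle|0{+}1\rangle$ with $|\psi_{21,22}\rangle=|2{\pm}0\rangle|2\rangle|2\rangle|1\rangle$, whose $CD$ parts $|2\rangle|0{+}1\rangle$ and $|2\rangle|1\rangle$ overlap but differ, and most rows of Tables~I and~II are of this type; moreover such pairs yield relations like $b_{00,22}\pm b_{00,02}=0$, which must then be separated using both sign choices. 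With only identical-complement pairs it is doubtful you could reach all $36$ independent off-diagonal positions, so even as a plan the constraint-generation step needs to be broadened before the ``bookkeeping'' could be carried out.
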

\begin{proof}
Consider the subset of OPB (\ref{3333_nonlocal}):
\begin{equation}
\begin{array}{l l}\label{3333_irreducible2}
    |\phi_{1,2,3,4}\rangle=|0\rangle|1\rangle|0\pm1\rangle|0\pm2\rangle &  \\
    |\phi_{5,6,7,8}\rangle=|1\rangle|2\rangle|1\pm2\rangle|1\pm0\rangle &  \\
    |\phi_{9,10,11,12}\rangle=|2\rangle|0\rangle|2\pm 0\rangle|2\pm 1\rangle &  \\
    |\phi_{13,14,15,16}\rangle=|1\rangle|0\pm1\rangle|0\pm2\rangle|0\rangle &  \\
    |\phi_{17,18,19,20}\rangle=|2\rangle|1\pm2\rangle|1\pm0\rangle|1\rangle &   \\
    |\phi_{21,22,23,24}\rangle=|0\rangle|2\pm 0\rangle|2\pm 1\rangle|2\rangle &   \\
    |\phi_{25,26,27,28}\rangle=|0\pm1\rangle|0\pm2\rangle|0\rangle|1\rangle &  \\
    |\phi_{29,30,31,32}\rangle=|1\pm2\rangle|1\pm0\rangle|1\rangle|2\rangle &  \\
    |\phi_{33,34,35,36}\rangle=|2\pm 0\rangle|2\pm 1\rangle|2\rangle|0\rangle &  \\
    |\phi_{37,38,39,40}\rangle=|0\pm2\rangle|0\rangle|1\rangle|0\pm1\rangle &  \\
    |\phi_{41,42,43,44}\rangle=|1\pm0\rangle|1\rangle|2\rangle|1\pm2\rangle &  \\
    |\phi_{45,46,47,48}\rangle=|2\pm 1\rangle|2\rangle|0\rangle|2\pm 0\rangle &  \\
\end{array}
\end{equation}
We will show quantum states (\ref{3333_irreducible}) and (\ref{3333_irreducible2}) are irreducible in bipartition $AB|CD$.
If Alice and Bob go first, let POVM $\{\Pi_{AB}\}$ be a general orthogonal-preserving measure on $\mathcal{H}_A\otimes\mathcal{H}_B$.
Each POVM element can be written as a $9\times 9$ matrix on basis $\{|ij\rangle|i,j=0,1,2\}$ of $\mathcal{H}_A\otimes\mathcal{H}_B$:
\[
\Pi_{AB}=\left(
           \begin{array}{ccc}
             b_{00,00} & \cdots & b_{00,22} \\
             \vdots & \ddots  & \vdots \\
             b_{22,00} & \cdots  & b_{22,22} \\
           \end{array}
         \right)\in\mathbb{C}^{9\times 9}.
\]

For $|\psi_5\rangle$ and $|\psi_{21,22}\rangle$, let $\langle\psi_5|\Pi_{AB}\otimes\mathbb{I}_C\otimes\mathbb{I}_D|\psi_{21,22}\rangle=0$. We have $a_{00,22}\pm a_{00,02}=0$ which implies $a_{00,22}=a_{00,02}=0$.
For $|\phi_{17,19}\rangle$ and $|\phi_{25,26,27,28}\rangle$, let $\langle\phi_{17,19}|\Pi_{AB}\otimes\mathbb{I}_C\otimes\mathbb{I}_D|\phi_{25,26,27,28}\rangle=0$.
We have
\begin{gather*}
    (a_{21,00}+a_{21,02}+a_{21,10}+a_{21,12})\pm \\
   (a_{22,00}+a_{22,02}+a_{22,10}+a_{22,12})=0 \\
    (a_{21,00}-a_{21,02}+a_{21,10}-a_{21,12})\pm\\
    (a_{22,00}-a_{22,02}+a_{22,10}-a_{22,12})=0 \\
    (a_{21,00}+a_{21,02}-a_{21,10}-a_{21,12})\pm \\
    (a_{22,00}+a_{22,02}-a_{22,10}-a_{22,12})=0 \\
    (a_{21,00}-a_{21,02}-a_{21,10}+a_{21,12})\pm \\
    (a_{22,00}-a_{22,02}-a_{22,10}+a_{22,12})=0
\end{gather*}
which implies $a_{21,00}=a_{21,02}=a_{21,10}=a_{21,12}=a_{22,00}=a_{22,02}=a_{22,10}=a_{22,12}=0$.
Similarly, we can obtain the off-diagonal elements of $\Pi_{AB}$ are all equal to 0.

Let $\langle \phi_{25}|\Pi_{AB}\otimes\mathbb{I}_C\otimes\mathbb{I}_D|\phi_{26,27,28}\rangle=0$, we can obtain $a_{00,00}=a_{02,02}=a_{10,10}=a_{12,12}$.
Let $\langle \phi_{29}|\Pi_{AB}\otimes\mathbb{I}_C\otimes\mathbb{I}_D|\phi_{30,31,32}\rangle=0$, we can obtain $a_{11,11}=a_{10,10}=a_{21,21}=a_{20,20}$.
Let $\langle \phi_{33}|\Pi_{AB}\otimes\mathbb{I}_C\otimes\mathbb{I}_D|\phi_{34,36,36}\rangle=0$, we can obtain $a_{22,22}=a_{21,21}=a_{02,02}=a_{01,01}$.
Since all the diagonal elements of $\Pi_{AB}$ are equal, $\Pi_{AB}$ is proportional to an identity. Namely, Alice and Bob can not go first.
Since quantum states (\ref{3333_irreducible}) and (\ref{3333_irreducible2}) are invariant under permutation, Charlie and Eve can not go first.
That is, under the partition $AB|CD$, quantum states (\ref{3333_irreducible}) and (\ref{3333_irreducible2}) are locally irreducible.
Similarly, (\ref{3333_irreducible}) and (\ref{3333_irreducible2}) are locally irreducible under partition $AC|BD$ and $AD|BC$.
Find full proof of Lemma \ref{lem:3333_irreducible2} in Appendix \ref{app:lem3333irreducible2}.
\end{proof}

\begin{lemma}\label{lem:3333_irreducible3}
In every bipartition, i.e., $A|BCD$, $B|ACD$, $C|ABD$ and $D|ABC$, orthogonal product states (\ref{3333_nonlocal}) are locally irreducible.
\end{lemma}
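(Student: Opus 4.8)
The plan is to follow the template of the proofs of Lemma~\ref{lem:3333_irreducible} and Lemma~\ref{lem:3333_irreducible2}. Because the set (\ref{3333_nonlocal}) is invariant under the cyclic permutation of the four parties, it is enough to treat the single bipartition $A|BCD$; the cases $B|ACD$, $C|ABD$ and $D|ABC$ then follow verbatim. Within $A|BCD$ the party $A$ cannot go first: an orthogonality-preserving POVM supported on $\mathcal{H}_A$ alone was already shown to be trivial in Lemma~\ref{lem:3333_irreducible}, whose computation only involved a measurement on $A$. Hence the entire content is to prove that if $BCD$ go first --- that is, if $\{\Pi_{BCD}\}$ is an orthogonality-preserving POVM on $\mathcal{H}_B\otimes\mathcal{H}_C\otimes\mathcal{H}_D\cong\mathbb{C}^{27}$ --- then each element $\Pi_{BCD}$ is proportional to $\mathbb{I}_{27}$.

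Write $\Pi_{BCD}=(b_{ijk,i'j'k'})$ in the basis $\{|ijk\rangle\}_{i,j,k=0}^{2}$. For two states $|\alpha\rangle_A|u\rangle_{BCD}$ and $|\beta\rangle_A|v\rangle_{BCD}$ of (\ref{3333_nonlocal}), keeping the post-measurement states orthogonal requires $\langle\alpha|\beta\rangle\,\langle u|\Pi_{BCD}|v\rangle=0$, so whenever the $A$-labels are non-orthogonal we get $\langle u|\Pi_{BCD}|v\rangle=0$. The first step is to show $\Pi_{BCD}$ is diagonal. I would group the states of (\ref{3333_nonlocal}) by their $A$-component, and use pairs lying in the same group together with pairs from groups whose $A$-labels overlap (for example $|0\rangle$ against $|0\pm1\rangle$ or $|0\pm2\rangle$). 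For each such pair, expanding $\langle u|\Pi_{BCD}|v\rangle=0$ over the $|i\pm j\rangle$-factors occurring in $u$ and $v$ and taking all sign choices produces a small linear system that forces the individual entries $b_{ijk,i'j'k'}$ to vanish --- exactly the mechanism already used for $a_{00,22}=a_{00,02}=0$ in Lemma~\ref{lem:3333_irreducible2} and for the off-diagonal entries of $\Pi_A$ and $\Pi_{BC}$ in Theorem~\ref{thm:ddd_nonlocal}. One then checks that the chosen pairs reach every one of the $27\cdot 26$ off-diagonal positions, so $\Pi_{BCD}$ is diagonal.

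The second step uses the $\pm$-bearing states of (\ref{3333_nonlocal}) to equate the diagonal entries. For a state $|\alpha\rangle_A|u\rangle_{BCD}$ whose $BCD$-component is a tensor product containing one or two $|i\pm j\rangle$ factors, imposing orthogonality among its sign variants --- now with $\Pi_{BCD}$ already diagonal --- yields equalities such as $b_{100,100}=b_{102,102}=b_{110,110}=b_{112,112}$ coming from $|0\rangle|1\rangle|0\pm1\rangle|0\pm2\rangle$. Collecting all equalities produced in this way defines a graph on the $27$ diagonal positions; checking that this graph is connected shows that all $27$ diagonal entries coincide, hence $\Pi_{BCD}\propto\mathbb{I}_{27}$. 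Thus $BCD$ cannot go first either, and together with the previous paragraph the set (\ref{3333_nonlocal}) is locally irreducible in $A|BCD$, and so --- by cyclic symmetry --- in every one-versus-three bipartition. As with Lemma~\ref{lem:3333_irreducible2}, the explicit choice of pairs and the routine inner-product evaluations would be deferred to the appendix.

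The main difficulty is organizational rather than conceptual. Unlike the homogeneous $\mathbb{C}^d\otimes\mathbb{C}^d\otimes\mathbb{C}^d$ construction, the set (\ref{3333_nonlocal}) mixes fully computational labels $|i\rangle|j\rangle|k\rangle|l\rangle$ with labels carrying $|i\pm j\rangle$ factors in various positions, so one has to select the orthogonality pairs carefully so that (i) their relations span all off-diagonal positions of the $27\times 27$ matrix and (ii) the resulting equality graph on the diagonal is connected; a bookkeeping table analogous to the one behind Lemma~\ref{lem:3333_irreducible2} is the natural device. A secondary point to watch is that several of the most useful pairs have non-orthogonal but unequal $A$-labels, e.g. $|0\rangle$ versus $|0\pm1\rangle$; there the scalar $\langle\alpha|\beta\rangle=\tfrac{1}{\sqrt2}\neq 0$, so the deduction $\langle u|\Pi_{BCD}|v\rangle=0$ still holds, but such pairings must be listed explicitly instead of being read off a cyclic orbit.
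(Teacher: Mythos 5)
Your proposal is correct and follows essentially the same route as the paper: invoke Lemma~\ref{lem:3333_irreducible} to rule out $A$ going first, then use orthogonality pairs whose $A$-labels are non-orthogonal (including the $|0\rangle$ versus $|0\pm1\rangle$ type pairings) to force all off-diagonal entries of the $27\times 27$ element $\Pi_{BCD}$ to vanish, equate the diagonal entries via the sign variants, and finish by permutation symmetry, with the explicit pair bookkeeping deferred to an appendix exactly as the paper does in its Tables. The only cosmetic difference is your explicit ``connected equality graph'' framing of the diagonal step, which the paper leaves implicit in its table of diagonal identities.
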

\begin{proof}
Consider the subset of (\ref{3333_nonlocal}):
\begin{equation}\label{3333_irreducible3}
\begin{array}{ccc}
  |\varphi_{1}\rangle=|1\rangle|0\rangle|1\rangle|0\rangle & |\varphi_{4}\rangle=|0\rangle|1\rangle|0\rangle|1\rangle & |\varphi_{7}\rangle=|0\rangle|0\rangle|0\rangle|0\rangle \\
  |\varphi_{2}\rangle=|2\rangle|1\rangle|2\rangle|1\rangle & |\varphi_{5}\rangle=|1\rangle|2\rangle|1\rangle|2\rangle & |\varphi_{8}\rangle=|1\rangle|1\rangle|1\rangle|1\rangle \\
  |\varphi_{3}\rangle=|0\rangle|2\rangle|0\rangle|2\rangle & |\varphi_{6}\rangle=|2\rangle|0\rangle|2\rangle|0\rangle & |\varphi_{9}\rangle=|2\rangle|2\rangle|2\rangle|2\rangle
\end{array}
\end{equation}
We will show quantum states (\ref{3333_nonlocal}) (i.e., (\ref{3333_irreducible}), (\ref{3333_irreducible2}) and (\ref{3333_irreducible3}) ) are irreducible in partition $A|BCD$, $B|ACD$, $C|ABD$ and $D|ABC$.
We only consider the partition $A|BCD$ because (\ref{3333_nonlocal}) is invariant under permutation.
Since (\ref{3333_nonlocal}) is irreducible due to Lemma \ref{lem:3333_irreducible}, Alice can not go first.
Assume that Bob, Charlie and Eve go first. Let $\{\Pi_{BCD}\}$ denote the POVM in composite Hilbert space $\mathcal{H}_B\otimes\mathcal{H}_B\otimes\mathcal{H}_D$.
Each POVM element can be written as a $27\times27$ matrix on the basis $\{|ijk\rangle|i,j,k\in{0,1,2}\}$:
\[
\Pi_{BCD}=\left(
           \begin{array}{ccc}
             c_{000,000} & \cdots & c_{000,222} \\
             \vdots & \ddots  & \vdots \\
             c_{222,000} & \cdots  & c_{222,222} \\
           \end{array}
         \right)\in\mathbb{C}^{27\times 27}.
\]

If measurement leaves the states orthogonal, then for $|\psi_{23}\rangle$ and $|\varphi_7\rangle$, $\langle\psi_{23}|\mathbb{I}_A\otimes\Pi_{BCD}|\varphi_{7}\rangle=0$.
Hence, $c_{022,000}=0$. For $|\phi_{1,2,3,4}\rangle$ and $|\phi_{21,22,23,24}\rangle$, let $\langle\phi_{1,2,3,4}|\mathbb{I}_A\otimes\Pi_{BCD}\otimes|\phi_{21,22,23,25}\rangle=0$, which implies
\begin{gather*}
(c_{100,222}+c_{100,212}+c_{100,022}+c_{100,012})\pm\\
(c_{102,222}+c_{102,212}+c_{102,022}+c_{102,012})\pm\\
(c_{110,222}+c_{110,212}+c_{110,022}+c_{10,012})+\\
(c_{112,222}+c_{112,212}+c_{112,022}+c_{112,012})=0\\
(c_{100,222}+c_{100,212}+c_{100,022}+c_{100,012})\pm\\
(c_{102,222}+c_{102,212}+c_{102,022}+c_{102,012})\mp\\
(c_{110,222}+c_{110,212}+c_{110,022}+c_{10,012})-\\
(c_{112,222}+c_{112,212}+c_{112,022}+c_{112,012})=0\\
(c_{100,222}-c_{100,212}+c_{100,022}-c_{100,012})\pm\\
(c_{102,222}-c_{102,212}+c_{102,022}-c_{102,012})\pm\\
(c_{110,222}-c_{110,212}+c_{110,022}-c_{10,012})+\\
(c_{112,222}-c_{112,212}+c_{112,022}-c_{112,012})=0\\
(c_{100,222}-c_{100,212}+c_{100,022}-c_{100,012})\pm\\
(c_{102,222}-c_{102,212}+c_{102,022}-c_{102,012})\mp\\
(c_{110,222}-c_{110,212}+c_{110,022}-c_{10,012})-\\
(c_{112,222}-c_{112,212}+c_{112,022}-c_{112,012})=0\\
(c_{100,222}+c_{100,212}-c_{100,022}-c_{100,012})\pm\\
(c_{102,222}+c_{102,212}-c_{102,022}-c_{102,012})\pm\\
(c_{110,222}+c_{110,212}-c_{110,022}-c_{10,012})+\\
(c_{112,222}+c_{112,212}-c_{112,022}-c_{112,012})=0\\
(c_{100,222}+c_{100,212}-c_{100,022}-c_{100,012})\pm\\
(c_{102,222}+c_{102,212}-c_{102,022}-c_{102,012})\mp\\
(c_{110,222}+c_{110,212}-c_{110,022}-c_{10,012})-\\
(c_{112,222}+c_{112,212}-c_{112,022}-c_{112,012})=0\\
(c_{100,222}-c_{100,212}-c_{100,022}+c_{100,012})\pm\\
(c_{102,222}-c_{102,212}-c_{102,022}+c_{102,012})\pm\\
(c_{110,222}-c_{110,212}-c_{110,022}+c_{10,012})+\\
(c_{112,222}-c_{112,212}-c_{112,022}+c_{112,012})=0\\
(c_{100,222}-c_{100,212}-c_{100,022}+c_{100,012})\pm\\
(c_{102,222}-c_{102,212}-c_{102,022}+c_{102,012})\mp\\
(c_{110,222}-c_{110,212}-c_{110,022}+c_{10,012})-\\
(c_{112,222}-c_{112,212}-c_{112,022}+c_{112,012})=0\\
\end{gather*}
We have $c_{100,222}=c_{100,212}=c_{100,022}=c_{100,012}=c_{102,222}=c_{102,212}=c_{102,022}=c_{102,012}=c_{110,222}=c_{110,212}=c_{110,022}=c_{10,012}=c_{112,222}=c_{112,212}=c_{112,022}=c_{112,012}=0$.
Similarly, we can prove that all the off-diagonal elements in $\Pi_{BCD}$ are equal to 0.

Let $\langle\phi_{5}|\mathbb{I}_{A}\otimes\Pi_{BCD}|\phi_{6,7,8}\rangle=0$, so that we can obtain $c_{211,211}=c_{210,210}=c_{221,221}=c_{220,220}$.
Analogously, the diagonal elements of $\Pi_{BCD}$ are all equal. See full proof in Appendix \ref{app:lem3333irreducible3}.
Since $\Pi_{BCD}$ is proportional to identity, Bob, Charlie and Eve can not go first.
Namely, quantum states (\ref{3333_nonlocal}) are locally irreducible in bipartition $A|BCD$.
Quantum states (\ref{3333_nonlocal}) are also locally irreducible in bipartition $B|ACD$, $C|ABD$ and $D|ABC$ because (\ref{3333_nonlocal}) are invariant under permutation.
This completes the proof.
\end{proof}

Combine Lemma \ref{lem:3333_irreducible2} and Lemma \ref{lem:3333_irreducible3}, we obtain the strong nonlocality of orthogonal product basis (\ref{3333_nonlocal}), which is exactly the following Theorem \ref{thm:3333_nonlocal}.
\begin{theorem}\label{thm:3333_nonlocal}
In $\mathbb{C}^3\otimes\mathbb{C}^3\otimes\mathbb{C}^3\otimes\mathbb{C}^3$, quantum states (\ref{3333_nonlocal}) are strongly nonlocal.
\end{theorem}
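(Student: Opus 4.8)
The plan is to read the theorem directly off Definition \ref{def:strong_nonlocality}: the product basis (\ref{3333_nonlocal}) is strongly nonlocal exactly when it is locally irreducible in every bipartition of the four parties $A,B,C,D$. In a four-partite system there are, up to relabeling, only two kinds of bipartition: the four ``$1$ versus $3$'' cuts $A|BCD$, $B|ACD$, $C|ABD$, $D|ABC$, and the three ``$2$ versus $2$'' cuts $AB|CD$, $AC|BD$, $AD|BC$. So it suffices to establish local irreducibility of (\ref{3333_nonlocal}) across each of these seven bipartitions, and then the definition is met.

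The assembly itself is short. First I would invoke Lemma \ref{lem:3333_irreducible2}, which already gives local irreducibility of (\ref{3333_nonlocal}) under $AB|CD$, $AC|BD$ and $AD|BC$, thereby disposing of all the $2|2$ cuts. Then I would invoke Lemma \ref{lem:3333_irreducible3}, which gives local irreducibility under $A|BCD$, $B|ACD$, $C|ABD$ and $D|ABC$, disposing of all the $1|3$ cuts. Since between them these two lemmas cover every bipartition, the hypothesis of Definition \ref{def:strong_nonlocality} is satisfied and the theorem follows at once. Note that local irreducibility within each subsystem (Lemma \ref{lem:3333_irreducible}) is, by the definition, subsumed under the bipartition requirement and need not be separately invoked here; it is only the bipartitions that the definition of strong nonlocality asks about.

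The genuinely laborious work is concentrated in Lemmas \ref{lem:3333_irreducible2} and \ref{lem:3333_irreducible3}, not in this corollary step. There one must show that an arbitrary orthogonality-preserving POVM element on the ``joined'' side — a $9\times 9$ matrix for a $2|2$ cut, a $27\times 27$ matrix for a $1|3$ cut — is forced to be proportional to the identity, by choosing pairs of states from the subsets (\ref{3333_irreducible}), (\ref{3333_irreducible2}), (\ref{3333_irreducible3}) whose post-measurement orthogonality constraints kill individual matrix entries, sweeping out all off-diagonal entries, and then chaining further constraints to merge all diagonal entries into one value; the cyclic/permutation symmetry of (\ref{3333_nonlocal}) reduces this to a single representative cut of each type. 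The main obstacle is thus purely a matter of bookkeeping in dimensions $9$ and $27$: verifying that the chosen subfamily of states really does generate enough linear relations to annihilate every off-diagonal entry and to link every diagonal entry. Granting Lemmas \ref{lem:3333_irreducible2} and \ref{lem:3333_irreducible3}, Theorem \ref{thm:3333_nonlocal} is immediate.
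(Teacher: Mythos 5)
Your proposal is correct and follows the paper's own route exactly: the theorem is obtained by combining Lemma \ref{lem:3333_irreducible2} (the $2|2$ cuts $AB|CD$, $AC|BD$, $AD|BC$) with Lemma \ref{lem:3333_irreducible3} (the $1|3$ cuts $A|BCD$, $B|ACD$, $C|ABD$, $D|ABC$), which together cover every bipartition required by Definition \ref{def:strong_nonlocality}. Your remark that Lemma \ref{lem:3333_irreducible} need not be cited at this final step is also consistent with the paper, where it is used only inside the proof of Lemma \ref{lem:3333_irreducible3}.
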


\section{Summary}
\label{sec:summary}
In summary, we have extended the previous known constructions of strongly nonlocal sets to $6(d-1)^2$-size sets in general three-party systems $\mathbb{C}^d\otimes\mathbb{C}^d\otimes\mathbb{C}^d$, which decrease one order of magnitude from the whole dimension $d^3$. Specifically in  $\mathbb{C}^3\otimes\mathbb{C}^3\otimes\mathbb{C}^3$ and $\mathbb{C}^4\otimes\mathbb{C}^4\otimes\mathbb{C}^4$ quantum system, we can see the gap much clearer. Furthermore, we also construct strongly nonlocal sets in specific four-party systems in $\mathbb{C}^3\otimes\mathbb{C}^3\otimes\mathbb{C}^3\otimes\mathbb{C}^3$. These results answer two open questions in \cite{halder2019strong} positively.

Note that in our examples, we can not locally  realize the four-party measurements even by unlimited entangled source. This paper have left some interesting questions for furture works. Is there any general construction of strongly nonlocal OPS in any Hilbert space $\mathbb{C}^{d_1}\otimes\mathbb{C}^{d_1}\otimes\cdots\otimes\mathbb{C}^{d_n}$? Moreover, can we find a smallest strongly nonlocal set in $\mathbb{C}^3\otimes\mathbb{C}^3\otimes\mathbb{C}^3$, more generally in any tripartite systems.

\bibliography{reference}

\appendix
\section{Proof of Lemma \ref{lem:3333_irreducible2}}
\label{app:lem3333irreducible2}
Since under the cyclic permutation of the parties, the bipartition $AB|CD$ is the same as bipartion $AD|BC$, we only discuss the cases $AB|CD$ and $AC|BD$.
Here, we only consider a subset of (\ref{3333_nonlocal}): (\ref{3333_irreducible}) and (\ref{3333_irreducible2}).

For bipartition $AB|CD$, let POVM $\{\Pi_{AB}\}$ denote a general orhthogonal-preserving measurement on composite space $AB$:
\[
\Pi_{AB}=\left(
           \begin{array}{ccc}
             a_{00,00} & \cdots & a_{00,22} \\
             \vdots & \ddots  & \vdots \\
             a_{22,00} & \cdots  & a_{22,22} \\
           \end{array}
         \right)\in\mathbb{C}^{9\times 9}.
\]

Since the measurement leaves the postmeasurement states mutually orthogonal, all off-diagonal elements of $\Pi_{AB}$ are equal to zero. See details in Table \ref{tab:lem_3333_irreducible2}.
\begin{table}[!htb]
    \centering
    \begin{tabular}{|c|c|c|}
    \hline
       No.  & States & Zero-Elements \\
    \hline
    \hline
    \multirow {2}{*}{1} & \multirow {2}{*}{$|\phi_{1}\rangle,|\phi_{29,30,31,32}\rangle$} & $a_{01,11},a_{01,10},a_{01,21},a_{01,20}$\\
      &  & $a_{11,01},a_{10,01},a_{21,01},a_{20,01}$ \\
     \hline
     \multirow{2}{*}{2 } & \multirow{2}{*}{$|\phi_{9}\rangle,|\phi_{25,26,27,28}\rangle$} & $a_{20,00},a_{20,02},a_{20,10},a_{20,12}$
     \\
      &  & $a_{00,20},a_{02,20},a_{10,20},a_{12,20}$ \\
      \hline
     \multirow {4}{*}{3} & \multirow {4}{*}{$|\phi_{17,19}\rangle,|\phi_{25,26,27,28}\rangle$} & $a_{21,00},a_{21,02},a_{21,10},a_{21,12}$\\
      &  & $a_{22,00},a_{22,02},a_{22,10},a_{22,12}$ \\
      &  & $a_{00,21},a_{02,21},a_{10,21},a_{12,21}$\\
      &  & $a_{00,22},a_{02,22},a_{10,22},a_{12,22}$\\
     \hline
      \multirow {2}{*}{4} & \multirow {2}{*}{$|\psi_{1}\rangle,|\phi_{25,26,27,28}\rangle$} & $a_{11,00},a_{11,02},a_{11,10},a_{11,12}$\\
      &  & $a_{00,11},a_{02,11},a_{10,11},a_{12,11}$\\
      \hline
     5 & $|\psi_{7,8}\rangle,|\phi_{37,39}\rangle$ & $a_{01,00},a_{02,00},a_{00,01},a_{00,02}$\\
     \hline
     6 & $|\psi_{9,10}\rangle,|\phi_{21,23}\rangle$ & $a_{10,00},a_{12,00},a_{00,10},a_{00,12}$\\
     \hline
     7 & $|\psi_{7,8}\rangle,|\phi_{5}\rangle$ & $a_{12,01},a_{12,02},a_{01,12},a_{02,12}$\\
     \hline
    8  & $|\psi_{11,12}\rangle,|\phi_{45,47}\rangle$ & $a_{22,20},a_{22,21},a_{20,22},a_{20,21}$\\
     \hline
    9 & $|\psi_{1,2}\rangle,|\phi_{17,19}\rangle$ & $a_{11,21},a_{11,22},a_{21,11},a_{22,11}$ \\
     \hline
     10 & $|\psi_{21,22}\rangle,|\phi_{41,43}\rangle$ & $a_{01,02},a_{01,22},a_{02,01},a_{22,01}$\\
     \hline
    11 & $|\psi_{19,20}\rangle,|\phi_{37,39}\rangle$ & $a_{20,11},a_{20,21},a_{11,20},a_{21,20}$\\
     \hline
   12  & $|\psi_{13}\rangle,|\psi_{7,8}\rangle$ & $a_{10,02},a_{02,10}$\\
     \hline
    13 & $|\psi_{13}\rangle,|\phi_{5}\rangle$ & $a_{10,12},a_{12,10}$\\
     \hline
    \end{tabular}
    \caption{Off-diagonal elements of $\Pi_{AB}$}
    \label{tab:lem_3333_irreducible2}
\end{table}
By setting $\langle \phi_{25}|\Pi_{AB}\otimes\mathbb{I}\otimes\mathbb{I}|\phi_{26,27,28}\rangle=\langle\phi_{29}|\Pi_{AB}\otimes\mathbb{I}\otimes\mathbb{I}|\phi_{30,31,32}\rangle=\langle\phi_{33}|\Pi_{AB}\otimes\mathbb{I}\otimes\mathbb{I}|\phi_{34,35,36}\rangle=0$, we obtain $a_{00,00}=a_{01,01}=\cdots=a_{22,22}$. Since $\Pi_{AB}$ is proportional to an identity operator, $AB$ can not go first. Because composite space $AB$ is the same as composite space $CD$, $CD$ can not go first. Hence, in bipartition $AB|CD$, (\ref{3333_nonlocal}) is locally irreducible. Quantum states (\ref{3333_irreducible}) and (\ref{3333_irreducible2}) are the same in bipartition $AD|BC$ and $AB|CD$, therefore (\ref{3333_nonlocal}) is locally irreducible in bipartition $AD|BC$.

For bipartition $AC|BD$,let POVM $\{\Pi_{AC}\}$ denote a general orthogonal-preserving measurement on composite space $AC$:
\[
\Pi_{AC}=\left(
           \begin{array}{ccc}
             b_{00,00} & \cdots & b_{00,22} \\
             \vdots & \ddots  & \vdots \\
             b_{22,00} & \cdots  & b_{22,22} \\
           \end{array}
         \right)\in\mathbb{C}^{9\times 9}.
\]

As the post measurement states leave pairwise orthogonal to each other, all off-diagonal elements of $\Pi_{AC}$ are all equal to zeros. Table \ref{tab:lem_3333_irreducible22} shows the details.
\begin{table}[!htb]
    \centering
    \begin{tabular}{|c|c|c|}
    \hline
       No.  & States & Zero-Elements \\
    \hline
    \hline
      1 & $|\psi_{1}\rangle,|\phi_{41,43}\rangle$ & $b_{10,12},b_{10,02},b_{12,10},b_{02,10}$ \\
    \hline
     2 & $|\psi_{5}\rangle,|\phi_{37,39}\rangle$ & $b_{02,01},b_{02,21},b_{01,02},b_{21,02}$ \\
     \hline
       3  & \multirow{2}{*}{$|\psi_{23,24}\rangle,|\phi_{9,11}\rangle$}  & $b_{00,20},b_{00,22},b_{10,20},b_{10,22}$\\
          & & $b_{20,00},b_{22,00},b_{20,10},b_{22,10}$\\
     \hline
      \multirow{2}{*}{4}  & \multirow{2}{*}{$|\phi_{1,3}\rangle,|\phi_{29,31}\rangle$}  &  $b_{00,11},b_{00,21},b_{01,11},b_{01,21}$\\
        & & $b_{11,00},b_{21,00},b_{11,01},b_{21,01}$\\
    \hline
      \multirow{2}{*}{5}  & \multirow{2}{*}{$|\phi_{5,7}\rangle,|\phi_{33,35}\rangle$} & $b_{11,22},b_{11,02},b_{12,22},b_{12,02}$ \\
        & & $b_{22,11},b_{02,11},b_{22,12},b_{02,1}$ \\
    \hline
     \multirow{2}{*}{6}  & \multirow{2}{*}{$|\phi_{9,11}\rangle,|\phi_{37,39}\rangle$} & $b_{22,01},b_{22,21},b_{20,01},b_{20,21}$ \\
        & & $b_{01,22},b_{21,22},b_{01,20},b_{21,20}$ \\
    \hline
     \multirow{2}{*}{7}  & \multirow{2}{*}{$|\phi_{13,14}\rangle,|\phi_{37,39}\rangle$} & $b_{10,01},b_{10,21},b_{12,01},b_{12,21}$\\
       & & $b_{01,10},b_{21,10},b_{01,12},b_{21,12}$\\
     \hline
    8 & $|\phi_{1,3}\rangle,|\phi_{13,14}\rangle$ & $b_{00,10},b_{00,12},b_{10,00},b_{12,00}$\\
     \hline
   \multirow{2}{*}{9}  & \multirow{2}{*}{$|\phi_{5,7}\rangle,|\phi_{17,18}\rangle$} & $b_{11,20},b_{11,21},b_{12,20},b_{12,21}$\\
     & & $b_{20,11},b_{21,11},b_{20,12},b_{21,12}$ \\
     \hline
   10  & $|\phi_{9,11}\rangle,|\phi_{21,22}\rangle$ & $b_{22,02},b_{20,02},b_{02,22},b_{02,20}$ \\
     \hline
   11  & $|\psi_{23,24}\rangle,|\phi_{21,22}\rangle$ & $b_{00,01},b_{00,02},b_{01,00},b_{01,00}$ \\
     \hline
   12  & $|\psi_{19,20}\rangle,|\phi_{13,14}\rangle$ & $b_{11,10},b_{11,12},b_{10,11},b_{12,11}$\\
     \hline
13 & $|\psi_{11}\rangle,|\phi_{33,35}\rangle$ & $b_{20,22},b_{22,20}$\\
     \hline
    \end{tabular}
    \caption{Off-diagonal elements of $\Pi_{AC}$}
    \label{tab:lem_3333_irreducible22}
\end{table}
Since the post measurement states preserve orthogonality, all diagonal elements of $\Pi_{AC}$ are equal, see in Table \ref{tab:lem_3333_irreducible222}.
\begin{table}[!htb]
    \centering
    \begin{tabular}{|c|c|c|}
    \hline
       No.  & States & Elements\\
    \hline
    \hline
       1  & $|\psi_{19,20}\rangle$ & $b_{11,11}=b_{21,21}$ \\
    \hline
       2  & $|\psi_{21,22}\rangle$ & $b_{22,22}=b_{02,02}$ \\
    \hline
       3  & $|\phi_{1,3}\rangle$ & $b_{00,00}=b_{01,01}$\\
    \hline
       4  & $|\phi_{5,7}\rangle$ & $b_{11,11}=b_{12,12}$ \\
     \hline
     5 & $|\phi_{9,11}\rangle$ & $b_{22,22}=b_{20,20}$ \\
     \hline
     6 & $|\phi_{13,14}\rangle$ & $b_{10,10}=b_{12,12}$ \\
     \hline
     7 & $|\phi_{17,18}\rangle$ & $b_{21,21}=b_{20,20}$ \\
     \hline
     8 & $|\phi_{21,22}\rangle$ &$b_{02,02}=b_{01,01}$ \\
    \hline
    \end{tabular}
    \caption{Diagonal elements of $\Pi_{AC}$}
    \label{tab:lem_3333_irreducible222}
\end{table}

 Sine $\Pi_{AC}$ is proportional to an identity operator, $AC$ can not go first. Since composite space $AC$ is the same as composite space $BD$, $BD$ can not go first. Hence, (\ref{3333_nonlocal}) is locally irreducible in bipartition $AC|BD$. This completes the proof of Lemma \ref{lem:3333_irreducible2}.

\section{Proof of Lemma \ref{lem:3333_irreducible3}}
\label{app:lem3333irreducible3}
In this section, we will complete the proof of Lemma \ref{lem:3333_irreducible3}. Firstly, we will show that is locally irreducible in bipartition $A|BCD$. Due to Lemma \ref{lem:3333_irreducible}, $A$ can not go first. We now assume that $BCD$ goes first. Let the POVM $\{\Pi_{BCD}\}$ describe a general orthogonality-preserving measurement on $BCD$. Each POVM element $\Pi_{BCD}$ can be written as $27\times 27$ matrix in the basis $\{|000\rangle,|001\rangle,\cdots,|221\rangle,|222\rangle\}$ basis of $\mathcal{H}_{BCD}$:
\[
\Pi_{BCD}=\left(
           \begin{array}{ccc}
             c_{000,000} & \cdots & c_{000,222} \\
             \vdots & \ddots  & \vdots \\
             c_{222,000} & \cdots  & c_{222,222} \\
           \end{array}
         \right)\in\mathbb{C}^{27\times 27}.
\]

Since every measurement leaves postquantum states mutually orthogonal, all off-diagonal elements of $\Pi_{BCD}$ are equal to zero. The details are presented in Table \ref{tab:3333_irreducible3}. In this table, we only show half of off-diagonal elements of $\Pi_{BCD}$, the rest of off-diagonal elements are equal to zero (i.e., $a_{i,j}=a_{ji}$ for $i,j\in\{000,\cdots,222\}$) since $\Pi_{BCD}$ is positive definite.
\begin{table*}[!htb]
    \centering
    \begin{tabular}{|c|c|c||c|c|c|}
    \hline
      No.  & States  & Zero-Elements & No.  & States  & Zero-Elements\\
    \hline
    \hline
        \multirow {4}{*}{1} & & $c_{100,212},c_{100,222},c_{100,022},c_{100,012}$ & \multirow {2}{*}{33} & $|\phi_{21,22,23,24}\rangle$ & $c_{001,222},c_{001,212},c_{001,022},c_{001,012}$\\
         & $|\phi_{1,2,3,4}\rangle$ & $c_{102,212},c_{102,222},c_{102,022},c_{102,012}$& & $|\phi_{25,26,27,28}\rangle$ & $c_{201,222},c_{201,212},c_{201,022},c_{201,012}$\\
    \cline{4-6}
         & $|\phi_{21,22,23,24}\rangle$ & $c_{110,212},c_{110,222},c_{110,022},c_{110,012}$ & \multirow {2}{*}{34} & $|\phi_{21,22,23,24}\rangle$ & $c_{220,222},c_{220,212},c_{220,022},c_{220,012}$\\
         &  & $c_{112,212},c_{112,222},c_{112,022},c_{112,012}$ & & $|\phi_{33,34,35,36}\rangle$ & $c_{120,222},c_{120,212},c_{120,022},c_{120,012}$\\
    \hline
         \multirow {2}{*}{2} & $|\phi_{1,2,3,4}\rangle$ & $c_{001,100}, c_{001,102},c_{001,110},c_{001,112}$ & \multirow {2}{*}{35} & $|\phi_{21,22,23,24}\rangle$ & $c_{010,222},c_{010,212},c_{010,022},c_{010,012}$\\
         & $|\phi_{25,26,27,28}\rangle$ & $c_{201,100}, c_{201,102},c_{201,110},c_{201,112}$ & & $|\phi_{37,38,39,40}\rangle$ & $c_{011,222},c_{011,212},c_{011,022},c_{011,012}$\\
    \hline
         \multirow {2}{*}{3}& $|\phi_{1,2,3,4}\rangle$ & $c_{220,100}, c_{220,102},c_{220,110},c_{220,112}$ & \multirow {2}{*}{36} & $|\phi_{21,22,23,24}\rangle$ & $c_{121,222},c_{121,212},c_{121,022},c_{121,012}$\\
         & $|\phi_{33,34,35,36}\rangle$ & $c_{120,100}, c_{120,102},c_{120,110},c_{120,112}$ & & $|\phi_{41,42,43,44}\rangle$ & $c_{122,222},c_{122,212},c_{122,022},c_{122,012}$\\
    \hline
     \multirow {2}{*}{4} & $|\phi_{1,2,3,4}\rangle$ & $c_{010,100},c_{010,102},c_{010,110},c_{010,112}$ & \multirow {2}{*}{37} & $|\psi_{1,2}\rangle$ & $c_{101,211},c_{101,210},c_{101,221},c_{101,220}$\\
     & $|\phi_{37,38,39,40}\rangle$ & $c_{011,100},c_{011,102},c_{011,110},c_{011,112}$ & & $|\phi_{5,6,7,8}\rangle$ & $c_{102,211},c_{102,210},c_{102,221},c_{102,220}$\\
     \hline
     \multirow {2}{*}{5} & $|\phi_{1,2,3,4}\rangle$ & $c_{121,100},c_{121,102},c_{121,110},c_{121,112}$ & \multirow {2}{*}{38} & $|\psi_{3,4}\rangle$ & $c_{212,022},c_{212,021},c_{212,002},c_{212,001}$\\
     & $|\phi_{41,42,43,44}\rangle$ & $c_{122,100},c_{122,102},c_{122,110},c_{122,112}$ & & $|\phi_{9,10,11,12}\rangle$ & $c_{210,022},c_{210,021},c_{210,002},c_{210,001}$\\
     \hline
     \multirow {4}{*}{6}& & $c_{211,000},c_{211,020},c_{211,100},c_{211,120}$ & \multirow {2}{*}{39} & $|\psi_{5,6}\rangle$ & $c_{020,100},c_{020,102},c_{020,110},c_{020,112}$\\
     & $|\phi_{5,6,7,8}\rangle$ & $c_{210,000},c_{210,020},c_{210,100},c_{210,120}$ & & $|\phi_{1,2,3,4}\rangle$ & $c_{021,100},c_{021,102},c_{021,110},c_{021,112}$\\
     \cline{4-6}
     & $|\phi_{13,14,15,16}\rangle$ & $c_{221,000},c_{221,020},c_{221,100},c_{221,120}$ & \multirow {2}{*}{40} & $|\psi_{1,2}\rangle$ & $c_{101,000},c_{101,020},c_{101,100},c_{101,120}$\\
     &  & $c_{220,000},c_{220,020},c_{220,100},c_{220,120}$ & & $|\phi_{13,14,15,16}\rangle$ & $c_{102,000},c_{102,020},c_{102,100},c_{102,120}$\\
     \hline
     \multirow {2}{*}{7}& $|\phi_{5,6,7,8}\rangle$ & $c_{001,211},c_{001,210},c_{001,221},c_{001,220}$ & \multirow {2}{*}{41} & $|\psi_{3,4}\rangle$ & $c_{212,111},c_{212,101},c_{212,211},c_{212,201}$\\
     & $|\phi_{25,26,27,28}\rangle$ & $c_{201,211},c_{201,210},c_{201,221},c_{201,220}$ & & $|\phi_{17,18,19,20}\rangle$ & $c_{210,111},c_{210,101},c_{210,211},c_{210,201}$ \\
     \hline
     \multirow {2}{*}{8} & $|\phi_{5,6,7,8}\rangle$ & $c_{112,211},c_{112,210},c_{112,221},c_{112,220}$ & \multirow {2}{*}{42} & $|\psi_{7,8}\rangle$ & $c_{111,100},c_{111,102},c_{111,110},c_{111,112}$\\
     & $|\phi_{29,30,31,32}\rangle$ & $c_{012,211},c_{012,210},c_{012,221},c_{012,220}$ & & $|\phi_{1,2,3,4}\rangle$ & $c_{211,100},c_{211,102},c_{211,110},c_{211,112}$\\
     \hline
     \multirow {2}{*}{9} & $|\phi_{5,6,7,8}\rangle$ & $c_{121,211},c_{121,210},c_{121,221},c_{121,220}$ & \multirow {2}{*}{43} &$|\psi_{9,10}\rangle$ & $c_{222,211},c_{222,210},c_{222,221},c_{222,220}$\\
     & $|\phi_{41,42,43,44}\rangle$ & $c_{122,211},c_{122,210},c_{122,221},c_{122,220}$ & & $|\phi_{5,6,7,8}\rangle$ & $c_{022,211},c_{022,210},c_{022,221},c_{022,220}$\\
     \hline
     \multirow {2}{*}{10}& $|\phi_{5,6,7,8}\rangle$ & $c_{202,211},c_{202,210},c_{202,221},c_{202,220}$ & \multirow {2}{*}{44} & $|\psi_{11,12}\rangle$ & $c_{000,022},c_{000,021},c_{000,002},c_{000,001}$ \\
     & $|\phi_{45,46,47,48}\rangle$ & $c_{200,211},c_{200,210},c_{200,221},c_{200,220}$ & & $|\phi_{9,10,11,12}\rangle$ & $c_{100,022},c_{100,021},c_{100,002},c_{100,001}$\\
     \hline
     \multirow {4}{*}{11} & & $c_{111,022},c_{111,021},c_{111,002},c_{111,001}$ & 45 & $|\psi_{19}\rangle,|\phi_{5,6,7,8}\rangle$ & $c_{110,211},c_{110,210},c_{110,221},c_{110,220}$\\
     \cline{4-6}
     & $|\phi_{9,10,11,12}\rangle$ & $c_{101,022},c_{101,021},c_{101,002},c_{101,001}$ &46 & $|\psi_{19}\rangle,|\phi_{13,14,15,16}\rangle$ & $c_{110,000},c_{110,020},c_{110,100},c_{110,120}$\\
     \cline{4-6}
     & $|\phi_{17,18,19,20}\rangle$ & $c_{211,022},c_{211,021},c_{211,002},c_{211,001}$ &47 & $|\psi_{21}\rangle,|\phi_{9,10,11,12}\rangle$ & $c_{221,002},c_{221,001},c_{221,022},c_{221,021}$\\
     \cline{4-6}
     & & $c_{201,022},c_{201,021},c_{201,002},c_{201,001}$ & 48 & $|\psi_{21}\rangle,|\phi_{17,18,19,20}\rangle$ & $c_{221,111},c_{221,101},c_{221,211},c_{221,201}$\\
     \hline
     \multirow {2}{*}{12} & $|\phi_{9,10,11,12}\rangle$ & $c_{112,022},c_{112,021},c_{112,002},c_{112,001}$ & 49 & $|\psi_{23}\rangle,|\phi_{1,2,3,4}\rangle$ & $c_{002,110},c_{002,112},c_{002,100},c_{002,102}$ \\
     \cline{4-6}
     & $|\phi_{29,30,31,32}\rangle$ & $c_{012,022},c_{012,021},c_{012,002},c_{012,001}$& 50 & $|\psi_{23}\rangle,|\phi_{21,22,23,24}\rangle$ & $c_{002,222},c_{002,212},c_{002,022},c_{002,012}$\\
     \hline
     \multirow {2}{*}{13} & $|\phi_{9,10,11,12}\rangle$ & $c_{220,022},c_{220,021},c_{220,002},c_{220,001}$ & 51 & $|\varphi_1\rangle,|\phi_{41,42,45,46}\rangle$ & $c_{010,121},c_{010,122},c_{010,202},c_{010,200}$\\
     \cline{4-6}
      & $|\phi_{33,34,35,36}\rangle$ & $c_{120,022},c_{120,021},c_{120,002},c_{120,001}$ & 52 & $|\varphi_2\rangle,|\phi_{37,38,45,46}\rangle$ & $c_{121,010},c_{121,011},c_{121,202},c_{121,200}$\\
     \hline
     \multirow {2}{*}{14}& $|\phi_{9,10,11,12}\rangle$ & $c_{010,022},c_{010,021},c_{010,002},c_{010,001}$ & 53 & $|\varphi_{3}\rangle,|\phi_{37,38,41,42}\rangle$ & $c_{202,010},c_{202,011},c_{202,121},c_{202,122}$\\
     \cline{4-6}
     & $|\phi_{37,28,29,40}\rangle$ & $c_{011,022},c_{011,021},c_{011,002},c_{011,001}$ & 54 & $|\phi_{25,26}\rangle,|\phi_{41,42}\rangle$ & $c_{001,121},c_{001,122},c_{201,122},c_{201,121}$\\
     \hline
     \multirow {2}{*}{15} & $|\phi_{9,10,11,12}\rangle$ & $c_{201,022},c_{201,021},c_{201,002},c_{201,001}$& 55 & $|\phi_{29,30}\rangle,|\phi_{45,46}\rangle$ & $c_{112,202},c_{112,200},c_{012,202},c_{012,200}$\\
     \cline{4-6}
     & $|\phi_{45,46,47,48}\rangle$ & $c_{20,022},c_{200,021},c_{200,002},c_{200,001}$ & 56 & $|\phi_{33,34}\rangle,|\phi_{37,38}\rangle$ & $c_{220,010},c_{220,011},c_{120,010},c_{120,011}$\\
     \hline
     \multirow {2}{*}{16} & $|\phi_{13,14,15,16}\rangle$ & $c_{001,000},c_{001,020},c_{001,100},c_{001,120}$ &57 & $|\varphi_{5}\rangle,|\phi_{29,30,45,46}\rangle$ & $c_{212,112},c_{212,012},c_{212,202},c_{212,200}$\\
     \cline{4-6}
     & $|\phi_{25,26,27,28}\rangle$ & $c_{201,000},c_{201,020},c_{201,100},c_{201,120}$ & 58 & $|\varphi_{6}\rangle,|\phi_{33,34,37,38}\rangle$ & $c_{020,010},c_{020,011},c_{020,220},c_{020,120}$\\
     \hline
     \multirow {2}{*}{17} & $|\phi_{13,14,15,16}\rangle$ & $c_{112,000},c_{112,020},c_{112,100},c_{112,120}$& 59 & $|\varphi_{4}\rangle,|\phi_{25,26,41,42}\rangle$ & $c_{101,121},c_{101,122},c_{101,001},c_{101,201}$\\
     \cline{4-6}
     & $|\phi_{29,30,31,32}\rangle$ & $c_{012,000},c_{012,020},c_{012,100},c_{012,120}$ & 60 & $|\varphi_{7}\rangle,|\phi_{21,22,23,24}\rangle$ & $c_{000,222},c_{000,212},c_{000,022},c_{000,012}$\\
     \hline
     \multirow {2}{*}{18} & $|\phi_{13,14,15,16}\rangle$ & $c_{121,000},c_{121,020},c_{121,100},c_{121,120}$  & 61 & $|\varphi_{8}\rangle,|\phi_{13,14,15,16}\rangle$ & $c_{111,000},c_{111,020},c_{111,100},c_{111,120}$\\
     \cline{4-6}
     & $|\phi_{41,42,43,44}\rangle$ & $c_{122,000},c_{122,020},c_{122,100},c_{122,120}$ & 62 & $|\varphi_{9}\rangle,|\phi_{9,10,11,12}\rangle$ & $c_{222,111},c_{222,101},c_{222,211},c_{222,201}$\\
     \hline
     \multirow {2}{*}{19} & $|\phi_{13,14,15,16}\rangle$ & $c_{202,000},c_{202,020},c_{202,100},c_{202,120}$ & 63 & $|\psi_{23}\rangle,|\psi_{5,6},\phi_{41,42}\rangle$ & $c_{002,020},c_{002,021},c_{002,121},c_{002,122}$\\
     \cline{4-6}
     & $|\phi_{45,46,47,48}\rangle$ & $c_{200,000},c_{200,020},c_{200,100},c_{200,120}$ & 64 & $|\psi_{21}\rangle,|\psi_{3,4},\phi_{37,38}\rangle$  & $c_{221,212},c_{221,210},c_{221,010},c_{221,011}$\\
     \hline
     \multirow {2}{*}{20} & $|\phi_{17,18,19,20}\rangle$ & $c_{112,111},c_{112,101},c_{112,211},c_{112,201}$& 65& $|\psi_{19}\rangle,|\psi_{1,2},\phi_{45,46}\rangle$  & $c_{110,101},c_{110,102},c_{110,202},c_{110,200}$ \\
     \cline{4-6}
     & $|\phi_{29,30,31,32}\rangle$ & $c_{012,111},c_{012,101},c_{012,211},c_{012,201}$ & 66 & $|\phi_{41,42}\rangle,|\psi_{13,14}\rangle$ & $c_{121,011},c_{121,021},c_{122,011},c_{122,021}$\\
     \hline
     \multirow {2}{*}{21} & $|\phi_{17,18,19,20}\rangle$ & $c_{220,111},c_{220,101},c_{220,211},c_{220,201}$  &67 & $|\phi_{45,46}\rangle,|\psi_{15,16}\rangle$ & $c_{202,122},c_{202,102},c_{200,122},c_{200,102}$ \\
     \cline{4-6}
     & $|\phi_{33,34,35,36}\rangle$ & $c_{120,111},c_{120,101},c_{120,211},c_{120,201}$   & 68 & $|\phi_{37,38}\rangle,|\psi_{17,18}\rangle$ & $c_{010,200},c_{010,210},c_{011,200},c_{011,210}$  \\
     \hline
     \multirow {2}{*}{22} & $|\phi_{17,18,19,20}\rangle$ & $c_{010,111},c_{010,101},c_{010,211},c_{010,201}$ & 69& $|\varphi_{7}\rangle,|\phi_{37,38,33,34}\rangle$ & $c_{000,010},c_{000,011},c_{000,220},c_{000,120}$\\
     \cline{4-6}
     & $|\phi_{37,38,39,40}\rangle$ & $c_{011,111},c_{011,101},c_{011,211},c_{011,201}$ & 70 & $|\varphi_{8}\rangle,|\phi_{25,26,41,42}\rangle$ & $c_{111,121},c_{111,122},c_{111,001},c_{111,201}$ \\
     \hline
     \multirow {2}{*}{23} & $|\phi_{17,18,19,20}\rangle$ & $c_{202,111},c_{202,101},c_{202,211},c_{202,201}$  & 71 & $|\varphi_{9}\rangle,|\phi_{29,30,45,46}\rangle$ & $c_{222,202},c_{222,200},c_{222,112},c_{222,012}$  \\
     \cline{4-6}
      & $|\phi_{45,46,47,48}\rangle$ & $c_{200,111},c_{200,101},c_{200,211},c_{200,201}$ & 72 & $|\psi_{3,4}\rangle,|\phi_{33,34}\rangle$ & $c_{212,220},c_{212,120},c_{210,220},c_{210,120}$ \\
     \hline
     24 & $|\psi_{5,6}\rangle,|\phi_{25,26}\rangle$ & $c_{020,001},c_{020,201},c_{021,001},c_{021,201}$ & 73 & $|\psi_{1,2}\rangle,|\phi_{29,30}\rangle$ & $c_{101,112},c_{101,012},c_{102,112},c_{102,012}$\\
     \hline
     25 & $|\varphi_{7}\rangle,|\psi_{5,6}\rangle$ & $c_{000,020},c_{000,021}$ &74 & $|\varphi_{7}\rangle,|\phi_{1,2,3,4}\rangle$ & $c_{000,100},c_{000,102},c_{000,110},c_{000,112}$\\
     \hline
     26 & $|\psi_{23}\rangle,|\phi_{25,26}\rangle$ & $c_{002,002},c_{002,201}$ &75 & $|\varphi_{1}\rangle,|\psi_{13,14}\rangle$ & $c_{010,011},c_{010,021}$\\
     \hline
     27 & $|\varphi_{6}\rangle,|\phi_{9,10,11,12}\rangle$ & $c_{020,002},c_{020,001},c_{020,022},c_{020,021}$ & 76 & $|\varphi_{9}\rangle,|\phi_{9,10,11,12}\rangle$ & $c_{222,002},c_{222,001},c_{222,022},c_{222,021}$ \\
     \hline
     28 & $|\psi_{15,16}\rangle,|\phi_{17,18,19,20}\rangle$ & $c_{102,101},c_{102,111},c_{102,201},c_{102,211}$ & 77 & $|\varphi_{8}\rangle,|\psi_{1,2}\rangle$ & $c_{111,101},c_{111,102}$\\
     \hline
     29 & $|\psi_{19}\rangle,|\phi_{29,30}\rangle$ & $c_{110,112},c_{110,012}$ & 78 & $|\varphi_{8}\rangle,|\phi_{5,6,7,8}\rangle$ & $c_{111,211},c_{111,210},c_{111,221},c_{111,220}$\\
     \hline
     30 & $|\psi_{15,16}\rangle,|\varphi_2\rangle$ & $c_{121,122},c_{121,102}$ & 79 & $|\varphi_{3}\rangle,|\psi_{17,18}\rangle$ & $c_{202,200},c_{202,210}$\\
     \hline
     31 & $|\psi_{7,8}\rangle,|\psi_{17,18}\rangle$ & $c_{200,111},c_{200,211},c_{210,211},c_{210,111}$ & 80 & $|\psi_{17,18}\rangle,|\phi_{21,22,23,24}\rangle$ & $c_{210,212},c_{210,222},c_{210,012},c_{210,022}$\\
     \hline
     32 & $|\varphi_{9}\rangle,|\psi_{3,4}\rangle$ & $c_{222,212},c_{222,210}$ & 81 & $|\psi_{21}\rangle,|\phi_{33,34}\rangle$ & $c_{221,220},c_{221,120}$\\
     \hline
    \end{tabular}
    \caption{Off-Diagonal elements of $\Pi_{BCD}$}
    \label{tab:3333_irreducible3}
\end{table*}

Furthermore, we will show that all diagonal elements in $\Pi_{BCD}$ are equal in Table \ref{tab:3333_irreducible33}:
\begin{table}
  \centering
  \begin{tabular}{|c|c|c|}
     \hline
     No. & States & Elements \\
     \hline
     \hline
     1 & $|\psi_{1,2}\rangle$ & $c_{101,101}=c_{102,102}$ \\
     \hline
     2 & $|\psi_{3,4}\rangle$ & $c_{212,212}=c_{210,210}$ \\
     \hline
     3 & $|\psi_{5,6}\rangle$ & $c_{020,020}=c_{021,021}$ \\
     \hline
     4 & $|\psi_{7,8}\rangle$ &  $c_{111,111}=c_{211,211}$\\
     \hline
     5 & $|\psi_{9,10}\rangle$ &  $c_{222,222}=c_{022,022}$ \\
     \hline
     6 & $|\psi_{11,12}\rangle$ &  $c_{000,000}=c_{100,100}$\\
     \hline
     7 & $|\psi_{13,14}\rangle$ &  $c_{011,011}=c_{021,021}$\\
     \hline
     8 & $|\psi_{15,16}\rangle$ &  $c_{122,122}=c_{102,102}$\\
     \hline
     9 & $|\psi_{17,18}\rangle$ &  $c_{200,200}=c_{210,210}$\\
     \hline
     10 & $|\phi_{1,2,3,4}\rangle$ & $c_{100,100}=c_{102,102}=c_{110,110}=c_{112,112}$\\
     \hline
     11 & $|\phi_{5,6,7,8}\rangle$ & $c_{211,211}=c_{21,210}=c_{221,221}=c_{220,220}$\\
     \hline
     12 & $|\phi_{9,10,11,12}\rangle$ & $c_{002,002}=c_{001,001}=c_{022,022}=c_{021,021}$\\
     \hline
     13 & $|\phi_{13,14,15,16}\rangle$ & $c_{000,000}=c_{020,020}=c_{100,100}=c_{120,120}$\\
     \hline
     14 & $|\phi_{17,18,19,20}\rangle$ & $c_{111,111}=c_{101,101}=c_{211,211}=c_{201,201}$\\
     \hline
     15 & $|\phi_{21,22,23,24}\rangle$ & $c_{222,222}=c_{212,212}=c_{022,022}=c_{012,012}$\\
     \hline
     16 & $|\phi_{25,26}\rangle$ & $c_{001,001}=c_{201,201}$\\
     \hline
     17 & $|\phi_{29,30}\rangle$ & $c_{112,112}=c_{012,012}$\\
     \hline
     18 & $|\phi_{33,34}\rangle$ & $c_{220,220}=c_{120,120}$\\
     \hline
     19 & $|\phi_{37,38}\rangle$ & $c_{010,010}=c_{011,011}$\\
     \hline
     20 & $|\phi_{41,42}\rangle$ & $c_{121,121}=c_{122,122}$\\
     \hline
     21 & $|\phi_{45,26}\rangle$ & $c_{202,202}=c_{200,200}$\\
     \hline
   \end{tabular}
  \caption{Diagonal elements of $\Pi_{BCD}$}\label{tab:3333_irreducible33}
\end{table}

Since $\Pi_{BCD}$ is proportional to an identity operator, BCD can not go first. On account that (\ref{3333_nonlocal}) is invariant under the cyclic permutation of the parties, $ACD$, $ABD$ can not go fist as well.

\end{document}